\documentclass[journal]{IEEEtran}
%
  \usepackage[pdftex]{graphicx}
%
\usepackage[cmex10]{amsmath}
\interdisplaylinepenalty=2500
\usepackage{amsthm}
%
%
%
\theoremstyle{plain}
\newtheorem{theorem}{Theorem}
\newtheorem{lemma}{Lemma}
%
\begin{document}
\title{On the Performance of Adaptive Packetized Wireless Communication Links under Jamming}
\author{Koorosh~Firouzbakht,
        Guevara~Noubir,
        Masoud~Salehi
\thanks{K.~Firouzbakht and M.~Salehi (Emails: \{kooroshf, salehi\}@ece.neu.edu) are with the Electrical Engineering Department, Northeastern University, Boston, MA 02115.}%
\thanks{G.~Noubir (Email: noubir@ccs.neu.edu) is with the College of Computer and Information Science, Northeastern University, Boston, MA 02115. }%
\thanks{Research partially supported by NSF Award CNS-0915985.}
\thanks{This paper was partially presented at the ACM WiSec 2012 Conference, April 16-18, 2012 Tucson, Arizona, USA}%
\thanks{This work has been submitted to the IEEE for possible publication.  Copyright may be transferred without notice, after which this version may no longer be accessible.}%
}%
\markboth{Submitted to IEEE Transactions on Wireless Communications}%
{Shell \MakeLowercase{\textit{K.~Firouzbakht, G.~Noubir and M.~Salehi}}: Paper Title}
\maketitle
\begin{abstract}
\boldmath
We employ a game theoretic approach to formulate communication between two nodes over a wireless link in the presence of an adversary. We define a constrained, two-player, zero-sum game between a transmitter/receiver pair 
with adaptive transmission parameters
and an adversary with average and maximum power constraints. In this model, the \emph{transmitter}'s goal is to maximize the achievable expected performance of the communication link, defined by a utility function, while the \emph{jammer}'s goal is to minimize the same utility function. Inspired by capacity/rate as a performance measure, we define a general utility function and a payoff matrix which may be applied to a variety of jamming problems. We show the existence of a threshold $(J_{\text{TH}})$ such that if the jammer's average power exceeds $J_{\text{TH}}$, the expected payoff of the transmitter at Nash Equilibrium (NE) is the same as the case when the jammer uses its maximum allowable power, $J_{\max}$, all the time. We provide analytical and numerical results for transmitter and  jammer optimal strategies and a closed form expression for the expected value of the game at the NE. As a special case, we investigate the maximum achievable transmission rate of a rate-adaptive, packetized, wireless AWGN communication link under different jamming scenarios and show that randomization can significantly assist a smart jammer with limited average power.  
\end{abstract}

\begin{IEEEkeywords}
Jamming, Rate Adaptation, Game Theory, Wireless Communications.
\end{IEEEkeywords}
\IEEEpeerreviewmaketitle
%
%
%
\section{Introduction}
\label{Sec:Introduction}
\IEEEPARstart{O}{ver} the last decades, wireless communication has been established as an enabling technology to an
increasingly large number of applications. The convenience of wireless and its support of mobility
has revolutionized the way we access information services and interact with the physical
world. Beyond enabling mobile devices to access information and data services ubiquitously,
wireless technology is  widely used in cyber-physical systems such as air-traffic control, power
plants synchronization, transportation systems, and human body implantable devices. This
pervasiveness has elevated wireless communication systems to the level of critical infrastructure.
Radio-frequency wireless communications occur over a broadcast medium, that is not only shared
between the communicating nodes but is also exposed to adversaries. Jamming is one of the most
prominent security threats as it not only can lead to denial of service attacks, but can also be
a prelude to spoofing attacks.

Anti-jamming has been an active area of research for decades. Various techniques for combating
jamming have been developed at the physical layer
which include directional
antennas, spread spectrum communication and power/modula\break tion/coding control. At the time, most of wireless systems were neither packetized nor networked. Reliable communication in the presence of adversaries has regained significant interest in the last few years, as new jamming attacks as well as need for more complex applications and deployment environments have emerged. Several specifically crafted attacks and counter-attacks have been proposed for packetized wireless data networks~\cite{negi+p:jam,LinN04,LiKP07,WilhelmMSL11}, 
multiple access resolution~\cite{BenderFHKL05,GilbertRC06,BayraktarogluKLNRT08, AwerbuchRS08}, 
multi-hop networks~\cite{TagueSNP08, LiKP07}, 
broadcast and control communication~\cite{KBKV06, ChiangH07, ChanLNT07, TagueLP07,LazosLK09,LiuPDL10,LiuLK11}, cross-layer resiliency~\cite{LinN05}, wireless sensor networks~\cite{XuKWY06,XuTZ08}, spread-spectrum without shared secrets~\cite{StrasserCSM08, SlaterPRB09, JinNT09}, and navigation information broadcast systems~\cite{RasmussenCC07}. 

Nevertheless, very little work has been done on protecting rate adaptation algorithms against 
adversarial attacks. Rate adaptation plays an important role in widely used wireless 
communication systems such as the IEEE 802.11 standard as the link quality in a WLAN is often 
highly dynamic. In recent years, a number of algorithms for rate adaptation have been proposed in the literature~\cite{HollandVB01,JuddXP08,VutukuruHK09,RahulFDC09,RamachandranKZG08,CampE08,KimSSD06,WongHSV06}, and some are widely deployed~\cite{Bicket05,LacageMT04}. 
Recently, rate adaptation for the widely used
IEEE 802.11 protocol was investigated in~\cite{PelechrinisBKG09, NoubirRST11}. Experimental and
theoretical analysis of optimal jamming strategies against currently deployed rate adaptation
algorithms indicate that the performance of IEEE 802.11 can be significantly degraded with very few interfering
pulses. The commoditization of software radios makes these attacks very practical and calls for
investigation of the capacity of packetized communication under adaptive \nolinebreak jamming.

In this work, we focus on the problem of determining the optimal transmission strategies and adaptation mechanisms for a transmitter/receiver with multiple transmission choices/ parameters 
(multiple transmission rates, different transmission powers, etc) when the wireless channel is subject to jamming by a power constrained jammer. We consider a setup where a pair of nodes (transmitter and receiver) communicate using data packets. An adversary can interfere with the communication but is constrained by an instantaneous maximum power per packet ($J_{\max}$) as well as a long-run average power ($J_{\text{ave}}$).

Packets each selected with appropriate transmission parameters, either overcome the interference or are lost otherwise. Inappropriate selection of the transmission parameters can either increase the chance of underperforming (if the parameters are selected conservatively) or loosing a packet (if the parameters are selected aggressively). In this communication scenario, it is crucial to understand the interaction between the communicating nodes and the adversary, determine the long-term achievable maximum performance and the optimal transmitter strategy to achieve it, as well as the optimal strategy for the adversary. While, for a channel with fixed-power jammer, the optimal strategies for communication and jamming and the system performance are derived from the fundamental information theoretic results (See Section~\ref{Sec:Game_Analysis}), these questions are still open for a packetized communication~system. 

Our contributions can be summarized as follows:
\begin{itemize}
\item We formulate the interaction between the communicating nodes and an adversary using a
game-theoretic framework. We show the existence of the Nash Equilibrium (NE) for this non-typical constrained zero-sum game. 
\item We show that the standard information-theoretic results for a jammed
channel correspond to a pure NE.
\item We further characterize the game by showing that, when both players are allowed to
randomize their actions (e.g., coding rate and jamming power) a new NE appears with
surprising properties. We show the existence of a jamming threshold ($J_{\text{TH}}$) such that if the jammer
average power exceeds $J_{\text{TH} }$, the game value at the NE is the same as the case when the jammer uses $J_{\max}$ all \nolinebreak the \nolinebreak time. 
\item We provide analytical results for the optimal NE strategies and the expected value of the game at NE as a function of jammer's average power.
\end{itemize}
The remainder of the paper is organized as follows: In Section \ref{Sec:System_Model}, we introduce and define our model for the communication link in an adversarial setting. In Section \ref{Sec:Game_Model}, we formulate the interactions between the communicating nodes and the adversary as a constrained two-player zero-sum game and define a general utility function and a payoff matrix which are applicable to a variety of jamming problems.  Additionally, we discuss how the additional constraint on jammer's average power makes our game model different from a typical zero-sum game. In Section~\ref{Sec:Game_Characterization}, we show the existence of the NE for our constrained zero-sum game. We also prove the existence of the jamming threshold and its effect on the game outcome. In Section~\ref{Sec:Game_Analysis}, we provide analytical results for the players' optimal strategies and the game value at the NE when the jammer's average power takes different values. In section \ref{Sec:Special_Case} we derive analytical and numerical results for two special cases when the utility functions and payoffs are defined as the AWGN channel capacity. Finally, we conclude the paper in Section~\ref{Sec:Conclusion}.
%
%
%
\begin{figure}
 \centering
 \includegraphics[width = 3in]{./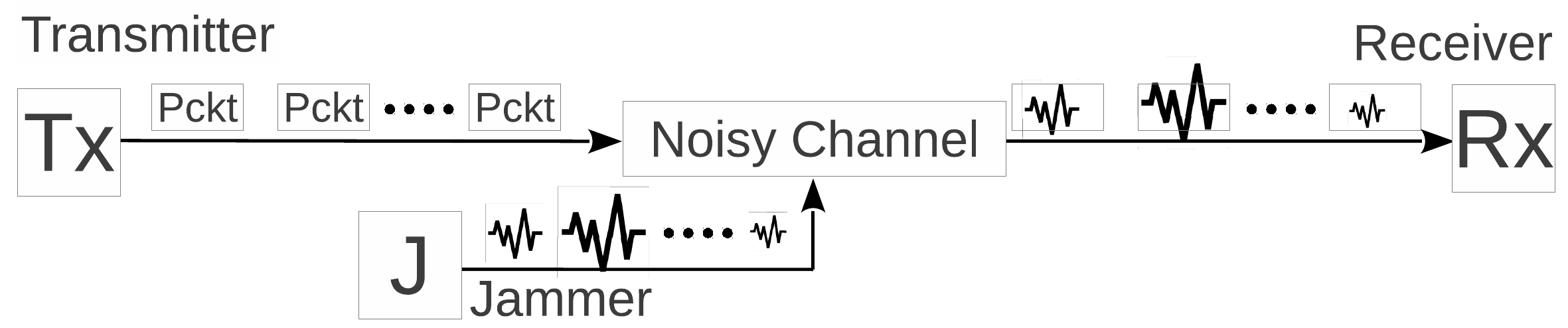}
 \caption{The system model; packetized communication link under jamming}
 \label{Fig:System_Model}
\end{figure}
%
%
%
%
\section{System Model and Problem Statement}
\label{Sec:System_Model}
In this section we formally define the problem under study. The corresponding system model is shown in Figure \ref{Fig:System_Model}. The transmitter and the receiver are communicating through a packetized, wireless noisy channel. Beside the channel noise, the transmitted packets are also disrupted by an adversary, the \emph{jammer}. The jammer's maximum and average jamming powers are assumed to be limited to $J_{\max}$ and $J_{\text{ave}}$, respectively.
%
%
%
\subsection{The Channel Model}
\label{SubSec:Channel_Model}
The wireless communication link between the transmitter and the receiver is assumed to be a single-hop, noisy channel
with fixed and known channel parameters. Furthermore, the communication link is being disrupted by an adversary
, the \emph{jammer}. The jammer transmits radio signals to increase the effective noise at the receiver and hence degrades the performance of the communication link (e.g., to decrease the channel capacity or throughput, degrade the quality of service, etc.) between the transmitter and the receiver. 
We assume \emph{packet-based} transmission, i.e., transmissions occur in disjoint time intervals (time slots) during which transmitter's and jammer's state (parameters) remain unchanged. 

In Section~\ref{Sec:Game_Model} we introduce and study a constrained two-player zero-sum game between the transmitter-receiver pair and the jammer in which the goal of the transmitter-receiver pair is to achieve the highest performance (e.g., channel capacity, channel throughput, etc.) while the jammer tries to minimize the achievable performance.
%
%
\subsection{The Transmitter Model}
\label{SubSec:Transmitter_Model}
\begin{figure}
 \centering
 \includegraphics[width = 3in]{./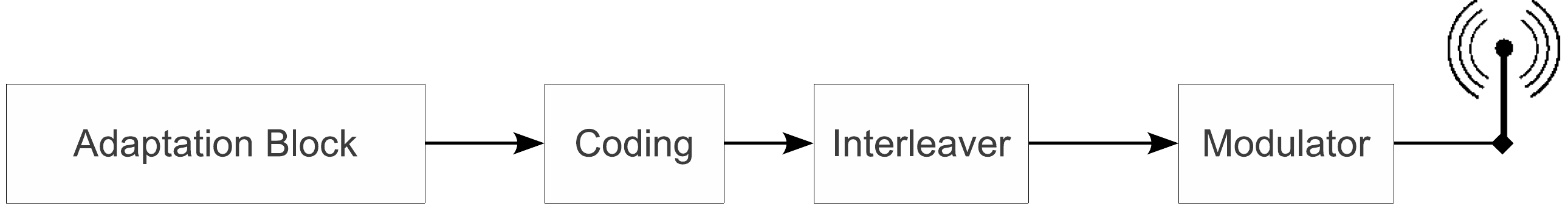}
 \caption{Transmitter model}
 \label{Fig:Transmitter_Model}
\end{figure}
The transmitter has an adaptation block which enables him to change and adapt his transmission parameters (e.g., transmission power, rate, modulation, etc.). 
In order to combat jamming, the transmitter changes his transmission parameters according to a probability distribution (his strategy). The transmitter chooses an optimal distribution to achieve the best average performance (or his expected payoff) which is presented by a preference/utility function. Common measures of performance in wireless networks are achievable capacity, network throughput, quality of service (QoS), power consumption, etc. \cite{Manshaei2013}.
As stated before, we assume transmissions are \emph{packet-based}.
The transmitter's model is shown in Figure~\ref{Fig:Transmitter_Model}.

The interleaver block in transmitter's model is a countermeasure to burst errors and burst jamming (transmitting a burst of white noise to disrupt a few bits in a packet). Interleaving is frequently used in digital communications and storage devices to improve the burst error correcting capabilities of a code. Burst errors are specially troublesome in short length codes as they have limited error correcting capabilities. In such codes, a few number of errors could result in a decoding failure or an incorrect decoding. A few incorrectly decoded codewords within a larger frame could make the entire frame corrupted.

Combining effective interleaving schemes such as cryptographic interleaving and capacity-achieving codes, such as turbo and LDPC codes, results in effective transmission schemes 
(see \cite{LinN04}) which make burst jamming ineffective.
Therefore, in our study we do not consider burst jamming.
%
%
%
%
\subsection{The Jammer Model}
\label{SubSec:Jammer_Model}
Radio jamming or simply \emph{jamming} is deliberate transmission of radio signals with the
intention of degrading performance of a communication link. 
A fairly large number of jamming models have been proposed in the literature \cite{Peterson}. The
most benign jammer is the \emph{barrage noise jammer}.  The barrage noise jammer transmits
bandlimited white Gaussian noise with power spectral density (psd) of $J$. It is usually assumed
that the barrage noise jammer power spectrum covers exactly the same frequency range as the
communicating system. This kind of jammer simply increases the Gaussian noise level from $N$ to
$(N+J)$ at the receiver's front end. Another frequently used jamming model is the \emph{pulse-noise
jammer}. The pulse noise jammer transmits pulses of bandlimited white Gaussian noise having total
average power of $J_{\text{ave}}$ referred to the receiver's front end. It is usually assumed that the
jammer chooses the center frequency and bandwidth of the noise to be the same as the transmitter's
center frequency and bandwidth. The jammer chooses its pulse duty factor to cause maximum
degradation to the communication link while maintaining the average jamming power $J_{\text{ave}}$. For a
more realistic model, the pulse-noise jammer could be subject to a maximum peak power constraint.
Other jamming models, to name a few, are  the \emph{partial-band jammer} and
\emph{single/multiple-tune jammer}.

We study a more sophisticated jamming model. The jammer in study is a reactive jammer, i.e., he is only active when a packet is being transmitted and silent otherwise.
We assume that the jammer uses a set of discrete jamming power levels arbitrary placed between
$J=0$ and $J=J_{\max}$. The jammer can choose any jamming power level to increase the effective noise at the receiver, but he has to maintain an
overall average jamming power, denoted by $J_{\text{ave}}$. The jammer uses  his available power levels according to
a probability distribution (his strategy), he chooses an optimal strategy to minimize the performance of the communication link while maintaining his maximum and average power constraints.
As shown in Section \ref{SubSec:Transmitter_Model}, burst jamming is not an optimal jamming scheme and hence, we assume that the jammer remains active during the entire packet transmission, i.e., the jammer transmits a
continuous jamming signal with a fixed power (variance) $J \in \big[ 0, J_{\max} \big]$ for each transmitting packet.
\begin{table}
\label{Table:Parameters}
\centering \caption{Table of Notations and Parameters}
\renewcommand{\arraystretch}{.8}
\begin{tabular}{cl}
 \hline \hline
  Notation & Description \\
  \hline 
  $J_{\text{ave}}$,\ $J_{\text{TH}}$			& Jammer's average power, jamming threshold\\
  $\mathcal{J}$, $\mathcal{J}_T$			& Jammer's and transmitter's action sets\\
  $\boldsymbol{J}_{(N_J+1)\times 1}$			& Jamming power vector\\
  $ \boldsymbol{x}_{(N_T+1)\times 1} \in \mathbf{X}$	& Transmitter's mixed-strategy\\
  $ \boldsymbol{y}_{(N_J+1)\times 1} \in \mathbf{Y}$	& Jammer's mixed-strategy\\
  $\mathbf{X,Y}$ 					& Mixed-strategy sets\\
  $Z$, $\boldsymbol{Z}_{(N_T+1)\times 1}$		& Transmitter's payoff matrix and payoff vector\\
  $ Z(\boldsymbol{x},\boldsymbol{y})$, $ Z(J_{\text{ave}})$	& Expected payoffs\\
  \hline \hline
\end{tabular}
\end{table}
%
%
%
%
\section{Game Model}
\label{Sec:Game_Model}
In this section we formulate the jamming problem introduced in section \ref{Sec:System_Model} in a game-theoretic context. We introduce the players, their respective strategies and define a general utility function and payoff matrix that could be used as a measure of performance in a wide rang of jamming problems.
At the physical layer, the interaction between the legitimate users (the transmitter-receiver pair) an the adversary (the jammer) is often modeled as a zero-sum game in order to capture their conflicting goals \cite{Manshaei2013}. 
We use the two-player zero-sum game framework to model the problem with the additional constraint that the jammer must maintain an overall average jamming power. We show how this additional constraint affects jammer's mixed-strategy set and makes the game model different from a typical zero-sum game. 
%
%
\subsection{The Mixed-strategy Sets} 
\label{SubSec:Mixed_Strategy_Set}
The jammer has the option to select its operating power in any given packet from the set of discrete values of available jamming powers, arbitrarily placed between $0$ and $J_{\max}$. We assume $(N_J + 1)$ distinct power levels, or pure strategies, are available to the jammer (the size of the jammer's action sets). In the most general case, the jammer's action set is a set of different jamming power levels in the interval $\big[0,J_{\max} \big]$. We denote this set by $\mathcal{J}$
\begin{equation*}
 \mathcal{J} = \Big\{0 \leq J_j \leq J_{\max},\ 0 \leq j \leq N_J\ \text{and}\ J_j \neq J_k \ \text{for}\ j\neq k\Big\}
\end{equation*}
Without loss of generality, we assume the power levels are sorted in an increasing order and $\big\{0, J_{\max} \big\} \subset \mathcal{J} $, i.e.
\begin{equation*}
  J_0 = 0 <  \dots < J_j < \dots < J_{N_J}=J_{\max} \qquad 0<j<N_J 
\end{equation*}
For simplicity, we place the possible jamming power levels in a vector and form the jammer's pure strategy
column vector, $\boldsymbol{J}$, where
\begin{equation}\label{Eq:Jamming_Power_Vector}
  \boldsymbol{J}^T   = \begin{bmatrix} J_0 & \cdots & J_j & \cdots & J_{N_J} \end{bmatrix}_{1\times(N_J+1)}
\end{equation}
and $^T$ indicates transposition. Unlike typical zero-sum games in which there are no other constraints on the mixed-strategies, in our model, the jammer's mixed-strategy must satisfy the additional average power constraint $J_{\text{ave}}$, where $J_{\text{ave}} \leq J_{\max}$. Hence, in our constrained game model, not all mixed-strategies (and not even those pure strategies that are greater than $J_{\text{ave}}$) are feasible strategies \cite{Owen}. If we let $\boldsymbol{y}$ denote the jammer's mixed-strategy vector%
\footnote{The probability distribution vector on the set of jammer's pure strategies.}
and let $\mathbf{Y}$ be the standard $\scriptstyle (N_J + 1)$-simplex, for a typical zero-sum game, we have the following relations
\begin{equation}
  \boldsymbol{y}^T 
 = \left[ \begin{smallmatrix} y_0 & \dots & y_j & \dots & y_{N_J} \end{smallmatrix} \right] 
  \in \mathbf{Y}: \ \
  \sum_{j=0}^{N_J} y_j = 1 \ \
  \begin{aligned} 
  \scriptstyle
   & y_j \geq 0 \\
   & 0\leq j \leq N_J \\
  \end{aligned}
\end{equation}
By using the jammer's pure strategy vector we define the new mixed-strategy set, $\mathbf{Y}_{\text{LE}|J_{\text{ave}}}$, for our constrained game as 
\begin{equation}\label{Eq:Jammer_Mixed_Set_LE}
 \mathbf{Y}_{\text{LE}|J_{\text{ave}}} = \{ \boldsymbol{y} \in \mathbf{Y} \big| \; \boldsymbol{y}^T \cdot
 \boldsymbol{J} \leq J_{\text{ave}} \}
\end{equation}
where $\mathbf{Y}_{\text{LE}|J_{\text{ave}}}$ is a subset of the $\scriptstyle (N_J+1)$-simplex which includes all mixed-strategies with an average power less than or equal to $J_{\text{ave}}$.

Since by introducing the new constrained mixed-strategy set, defined in equation \eqref{Eq:Jammer_Mixed_Set_LE}, we are eliminating some mixed-strategies that could have been otherwise selected, we must first establish the existence of the Nash equilibrium for this game model. This game is not a typical zero-sum game with a finite number of pure strategies for which the existence of the NE is guaranteed. In section~\ref{SubSec:Nash_Existence}, we prove the existence of the NE for the constrained game, where the jammer's mixed-strategy set is limited to  $\mathbf{Y}_{\text{LE}|J_{\text{ave}}}$.

The transmitter's pure strategy set, or equivalently his action set, is a set of $(N_T+1)$ discrete transmission parameters (e.g., power, rate, etc.). We assume each strategy from transmitter's action set can withstand up to a certain level of jamming power, we indicate this jamming power level by $J_T$ to distinguish it from the jammer's actual jamming power, $J$. We assume the packet that is being transmitted with this strategy can be fully recovered at the receiver for any jamming power less than or equal to $J_T$ but will be completely lost for jamming powers greater than $J_T$. 
This assumption is inspired by Shannon's channel capacity theorem which states that reliable communication at a given rate is possible when the noise power is below a certain level and becomes impossible if the noise power exceeds that value. Since corresponding to each transmitter pure strategy there exists a certain jammer power below which reliable transmission is possible, we can define a one-to-one relation between transmitter pure strategies and corresponding jammer power levels $J_T$ (also, see Section \ref{SubSec:Payoff} for more explanation about this assumption). We choose to use these jammer power values as representatives of transmitter's pure strategies.
As a result, the transmitter's pure strategy set can be defined as
\begin{equation}\label{Eq:Transmitter_Pure_Set}
 \mathcal{J}_T = \Big\{0 \leq J_{T,i} \leq J_{\max}, 0 \leq i \leq N_T \Big\}
\end{equation}
WLOG, we assume $J_{T,0} = 0$ and $J_{T,N_T} = J_{\max}$, i.e., the transmitter's highest and lowest payoffs correspond to the jammer's lowest $(J=0)$ and highest $(J=J_{\max})$ jamming powers. 
The transmitter's uses his available transmission parameters (or the equivalent jamming powers from the set $\mathcal{J}_T$) according to a probability distribution (his mixed-strategy) and his goal is to find an optimal strategy to maximize the expected performance of the communication link.
We use column vector $\boldsymbol{x}$, to indicate the transmitter's mixed-strategy vector.
\begin{equation}\label{Eq:Trans_Mixed_Set}
 \boldsymbol{x}^T = \begin{bmatrix} x_0 & \dots & x_i & \dots & x_{N_T}
 \end{bmatrix}_{1\times(N_T+1)} \in \mathbf{X}
\end{equation}
where $\mathbf{X}$ is the standard $\scriptstyle (N_T+1)$-simplex.
%
%
\subsection{The Utility Function and The Payoff Matrix}
\label{SubSec:Payoff}
Because transmissions occur in the presence of an adversary, recovery of the transmitted
information/packets at the receiver is not always guaranteed. Since each strategy from the transmitter's action set can sustain up to a certain level of jamming power, denoted by $J_T$, packets can be recovered only when
the actual jamming power, $J$, is less than or equal to $J_T$,
i.e., if and only if $J \leq J_T$. Therefore, the utility function of the game (the payoff to the transmitter), $Z(J_T,J)$, can be modeled as
\begin{equation}
\label{Eq:Utility_Function}
 Z \big(J_T,J \big) =
    \begin{cases}
      Z(J_T)  & J_T \geq J\\
      0       & J_T<J\\
    \end{cases} \quad J_T \in \mathcal{J}_T,\ J \in \mathcal{J}
\end{equation}
where $Z(J_T)$ represents the payoff of the communication link under jamming. The function $Z(J_T)$ assigns a positive value to each strategy from the transmitter's action set and is intuitively a strictly decreasing function of $J_T$, i.e., the payoff to the transmitter decreases when the jamming power increases, i.e.
\begin{equation}\label{Eq:Decreasing_Z_I}
 Z_0 > \dots > Z_i > \dots > Z_{N_T}; \qquad  0<i<N_T
\end{equation}
where $Z_i \stackrel{\Delta}{=} Z(J_T = J_i)$ and $J_i \in \mathcal{J}_T $. 
Even though $Z(J_T)$ could be any arbitrary decreasing function of $J_T$, defining $Z(J_T)$ based on the channel capacity is a common practice in the games involving a transmitter-receiver pair and an adversary \cite{Giannakis08,Koorosh2012,Altman07,Manshaei2013}.

Given that our game model is a constrained zero-sum two-player game, the payoff to the jammer is the negative of the
transmitter's payoff. Furthermore, we can formulate the payoffs (for each pure strategy pair) in a payoff matrix where the transmitter and the
jammer would be the row and the column players respectively. The resulting payoff matrix, $Z$, is in general a non-square matrix and from equation \eqref{Eq:Utility_Function}, we see that the non-zero elements of each row of $Z$ are equal. We will show in Lemma \ref{Lem:N_T} that WLOG, we can assume that $Z$ is a square, lower triangular matrix with equal non-zero entries in each rows, i.e.,
\begin{equation}\label{Eq:Game_Matrix} \renewcommand{\arraystretch}{.2}
  Z_{(N_T+1)\times(N_T+1)} =  
  \left[
  \begin{smallmatrix}
    Z_0     & 0       		& \cdot \cdot &         & 	& 		& 0      \\
    :       &         		& &         &         		&  	  	& :      \\
    Z_i     & \cdot \cdot  	& & Z_i     & 0       		& \cdot \cdot	& 0      \\
    :       &         		& &         &         		&  		& :      \\
    Z_{N_T} & \cdot \cdot 	& & Z_{N_T} & \cdot \cdot	& 		& Z_{N_T}\\ 
  \end{smallmatrix} 
  \right]
\end{equation}
and the expected payoff of the game for the mixed-strategy pair $(\boldsymbol{x},\boldsymbol{y})$ is
\begin{equation}\label{Eq:Expected_Payoff}
  Z(\boldsymbol{x},\boldsymbol{y}) = \boldsymbol{x}^T Z \ \boldsymbol{y}, \qquad
  \begin{aligned}
    & \boldsymbol{y} \in \mathbf{Y}_{\text{LE}|J_{\text{ave}}}\\ 
    & \boldsymbol{x} \in \mathbf{X}\\
  \end{aligned}
\end{equation}
\begin{lemma}\label{Lem:N_T}
  Let $Z$ be the payoff matrix in the constrained two-player zero-sum game defined by the utility function
  \eqref{Eq:Utility_Function}. The payoff matrix obtained by removing the dominated strategies is a square lower
  triangular matrix with size less than or equal to $\min \big\{ N_T, N_J \big\}$.
\end{lemma}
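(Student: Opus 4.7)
The plan is to exploit the sharp threshold structure of \eqref{Eq:Utility_Function}: each row $i$ has value $Z_i$ on a prefix of columns (those $j$ with $J_j \le J_{T,i}$) and $0$ afterwards, while each column $j$ is non-zero precisely on a suffix of rows. I begin with a one-shot clean-up. The $J=0$ column is strictly dominated (for the jammer) by every column $J_j>0$: for row $i\ge 1$ the two entries are either both $Z_i$, or $Z_i$ against $0$, while for row $i=0$ the $J=0$ column gives $Z_0>0$ but column $J_j>0$ gives $0$. Removing the $J=0$ column leaves the row $J_{T,0}=0$ identically zero, so it is dominated by the row $J_{T,N_T}=J_{\max}$ (whose entries are all $Z_{N_T}>0$) and is deleted as well. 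What remains is at most an $N_T \times N_J$ matrix.

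Next, I attach to each surviving row $i$ the prefix length $c_i = \max\{j: J_j \le J_{T,i}\}$ and to each surviving column $j$ the suffix start $r_j = \min\{i: J_{T,i} \ge J_j\}$. If two rows share the same $c$-value, they have identical non-zero patterns; since $Z_i$ is strictly decreasing in $J_{T,i}$, the one with larger $Z$ strictly dominates and the other is removed. If two columns share the same $r$-value they are literally identical columns and one can be discarded as weakly dominated. After these eliminations the surviving $c$-values are pairwise distinct and the surviving $r$-values are pairwise distinct, and no further row-row or column-column domination is possible: whenever $c_i<c_{i'}$, row $i$ beats row $i'$ on the low-index columns but loses on column $c_{i'}$, so the two are incomparable, and symmetrically for columns.

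Finally, re-index the reduced matrix so that its rows and columns run $0,\dots,M$ and $0,\dots,K$ in increasing order of $J_{T,i}$ and $J_j$, respectively. The $M+1$ distinct $c$-values lie in $\{0,\dots,K\}$, giving $M \le K$, while the $K+1$ distinct $r$-values lie in $\{0,\dots,M\}$, giving $K \le M$; hence $M=K$ and the matrix is square. Because $c_0<c_1<\cdots<c_M$ is an increasing sequence of $M+1$ distinct integers drawn from a set of size $M+1$, we must have $c_i = i$; equivalently row $i$ is non-zero on columns $0,\dots,i$ only, which is precisely the lower triangular form of \eqref{Eq:Game_Matrix}. Combined with the initial removal of one row and one column this yields $M+1 \le \min\{N_T,N_J\}$. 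The step I expect to require the most care is verifying that the deduplication of (identical) columns and the subsequent strict domination of redundant rows leave the threshold structure intact, so that the pigeonhole counting closing the argument is tight --- this is essentially automatic from the monotonicity of the prefix/suffix assignment, but worth writing out explicitly.
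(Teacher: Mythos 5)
The paper does not actually contain a proof of this lemma---it defers to \cite{Koorosh2012}, Section 3.3---so I can only judge your argument on its own terms. The combinatorial core of what you do is sound and, I suspect, close to what any proof must look like: each row is non-zero on a prefix of columns and each column on a suffix of rows; after merging rows with equal prefix length and columns with equal suffix start, the surviving prefix lengths and suffix starts are distinct, and the two-way pigeonhole count forces the matrix to be square with $c_i=i$, i.e.\ lower triangular. That part is correct and clean.

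The genuine gap is your opening move: deleting the $J=0$ column (and, in consequence, the $J_{T,0}=0$ row) by a dominance argument. First, this is only \emph{weak} payoff dominance, but more importantly it ignores that this is a \emph{constrained} game: a column can only be legitimately eliminated in favour of another column that is both entrywise no better for the transmitter \emph{and no more expensive in jamming power}, since shifting probability mass onto a costlier column can violate the budget $\boldsymbol{y}^T\boldsymbol{J}\le J_{\text{ave}}$. The $J=0$ column costs nothing, so no other column can substitute for it; indeed the paper's own optimal jammer strategies (e.g.\ \eqref{Eq:Y_Hat} and Theorem~\ref{Theorem:Powerful_Jammer}) place probability $Z_m Z_0^{-1}>0$ on $J=0$, and the target matrix \eqref{Eq:Game_Matrix} visibly retains both the all-non-zero first column and the first row $(Z_0,0,\dots,0)$---neither of which survives your reduction, so you end up proving triangularity of the wrong matrix. (The same care is needed when you discard one of two ``identical'' columns: you must keep the cheaper one.) The fix is straightforward: skip the initial deletions, note that $J_0=0$ guarantees every row has a non-empty prefix and $J_{T,N_T}=J_{\max}$ guarantees every column a non-empty suffix, use the power-aware notion of column dominance for the merging step, and run your pigeonhole argument on the full matrix. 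This yields exactly the form \eqref{Eq:Game_Matrix}, of size at most $\min\{N_T,N_J\}+1$ rather than $\min\{N_T,N_J\}$; that off-by-one discrepancy with the lemma's literal statement is arguably the paper's, since it immediately proceeds to work with an $(N_T+1)\times(N_T+1)$ matrix.
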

\begin{proof}
See \cite{Koorosh2012} Section 3.3.
\end{proof}

As a consequence of  Lemma \ref{Lem:N_T}, we need to consider only square matrices,
which simplifies further development. In the next section we will study the
outcome of the game under jammer's different values of average power. WLOG, in the following sections, we assume the size of $Z$ matrix is $N_T+1$.
%
%
%
\section{Game Characterization}
\label{Sec:Game_Characterization}
In this Section, we study two basic properties of the game. We will show that although we have put an additional constraint on the jammer's mixed-strategy set, the existence of Nash Equilibrium is still guaranteed.

Additionally, we show that by randomizing his strategy, the jammer can force the transmitter to operate at the lowest payoff, given that the average jamming power exceeds a certain threshold, $J_{\text{TH}}<J_{\max}$. We will also derive an upper bound for this jamming power threshold.
%
%
\subsection{Existence of the Nash Equilibrium}
\label{SubSec:Nash_Existence}
For every zero-sum game with finite set of pure strategies, there exists at least one  (pure or mixed) Nash equilibrium (NE) such that no player can do better by unilaterally deviating from his strategy \cite{Owen}. 
In our game model, we are assuming an additional constraint on the jammer's mixed-strategy set; the jammer must maintain a maximum average jamming power. 
This additional assumption changes the jammer's mixed-strategy set from a standard $n$-simplex to a subset of it.
Therefore, the existence of the NE for this non-typical zero-sum game must be established.  In the following lemma, we show that the existence of the NE under the additional constraint is guaranteed.
 \begin{lemma}
  \label{Lemma:Nash_Existence}
   For the constrained two-player zero-sum game defined by the utility function $Z \left( J_T,J \right)$, given in \eqref{Eq:Utility_Function}, and the payoff matrix $Z$, given by \eqref{Eq:Game_Matrix}, there exists at least one NE in the form of transmitter's mixed-strategy, $ \boldsymbol{x} \in \mathbf{X} $, and the jammer's mixed-strategy, $\boldsymbol{y} \in \mathbf{Y}_{\mathrm{LE}|J_{\mathrm{ave}}}$.
 \end{lemma}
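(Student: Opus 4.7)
The plan is to verify that the constrained zero-sum game satisfies the hypotheses of a classical minimax theorem, so that the saddle-point (equivalently, the Nash equilibrium of a zero-sum game) exists as in the unconstrained case. The only novelty here is that the jammer's strategy set is no longer a full simplex, so I must check that the constraint does not destroy the structure the minimax theorem requires.

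First, I would establish the structural properties of the two strategy sets. The transmitter's set $\mathbf{X}$ is a standard simplex and is trivially nonempty, compact, and convex. The jammer's constrained set $\mathbf{Y}_{\text{LE}|J_{\text{ave}}}$ is the intersection of the simplex $\mathbf{Y}$ with the closed halfspace $\{\boldsymbol{y}: \boldsymbol{y}^T \boldsymbol{J} \le J_{\text{ave}}\}$, hence it is the intersection of two closed convex sets contained in the bounded simplex. This makes $\mathbf{Y}_{\text{LE}|J_{\text{ave}}}$ closed, bounded, and convex, i.e.\ compact and convex. Nonemptiness is immediate: the pure strategy that always selects $J_0=0$ lies in the set since $0 \le J_{\text{ave}}$.

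Second, I would observe that the expected payoff $Z(\boldsymbol{x},\boldsymbol{y}) = \boldsymbol{x}^T Z\,\boldsymbol{y}$ defined in \eqref{Eq:Expected_Payoff} is bilinear in $(\boldsymbol{x},\boldsymbol{y})$. Therefore it is jointly continuous on $\mathbf{X}\times\mathbf{Y}_{\text{LE}|J_{\text{ave}}}$, affine (and hence concave and upper semi-continuous) in $\boldsymbol{x}$ for fixed $\boldsymbol{y}$, and affine (and hence convex and lower semi-continuous) in $\boldsymbol{y}$ for fixed $\boldsymbol{x}$. With these hypotheses in place I would invoke a minimax theorem of Sion/von Neumann/Ky~Fan type to conclude that
\[
  \max_{\boldsymbol{x}\in\mathbf{X}} \min_{\boldsymbol{y}\in\mathbf{Y}_{\text{LE}|J_{\text{ave}}}} \boldsymbol{x}^T Z\,\boldsymbol{y}
  \;=\;
  \min_{\boldsymbol{y}\in\mathbf{Y}_{\text{LE}|J_{\text{ave}}}} \max_{\boldsymbol{x}\in\mathbf{X}} \boldsymbol{x}^T Z\,\boldsymbol{y},
\]
with both extrema attained by compactness and continuity. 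Any pair $(\boldsymbol{x}^*,\boldsymbol{y}^*)$ realizing the common value is a saddle point of $Z(\cdot,\cdot)$ and hence a Nash equilibrium of the constrained zero-sum game.

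The only place where one could worry is the replacement of the jammer's simplex $\mathbf{Y}$ by its subset $\mathbf{Y}_{\text{LE}|J_{\text{ave}}}$; but since the constraint is a single linear inequality, compactness and convexity are preserved, and the minimax theorem goes through without modification. I expect the main subtlety to lie not in the existence proof itself but in the fact that the constraint can prevent the unconstrained pure-strategy saddle point from remaining feasible — this is precisely what motivates the existence of a nontrivial mixed NE and the jamming threshold $J_{\text{TH}}$ studied immediately afterward. The existence argument per se is a routine adaptation of the standard finite zero-sum existence theorem to a compact convex subset of the simplex.
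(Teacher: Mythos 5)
Your proof is correct, but it takes a genuinely different route from the paper's. The paper (Appendix~B) proves existence via linear programming duality: it writes the transmitter's max--min problem as a linear program, dualizes the inner minimization to obtain a single maximization program with an auxiliary variable, then explicitly verifies that this program is bounded (by bounding the objective $z_1 - z_2 - J_{\text{ave}}z_3$ using the first column of $Z$ and $J_0=0$) and feasible (by exhibiting $z_3=0$, $0<z_1<z_2$), and finally invokes LP duality to conclude that the jammer's dual minimization program attains the same value, which yields the mixed-strategy NE. You instead observe that $\mathbf{Y}_{\text{LE}|J_{\text{ave}}}$, being the intersection of the simplex with a closed halfspace, remains nonempty, compact, and convex, and then apply a Sion/von~Neumann-type minimax theorem to the bilinear payoff; this is shorter and isolates exactly why the extra linear constraint is harmless (it preserves compactness and convexity, which is all the minimax theorem needs). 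What the paper's heavier LP route buys in exchange is a concrete computational procedure: the same primal--dual pair it constructs for the existence proof is what the authors later use to solve constrained games numerically via the simplex algorithm, a byproduct your pure existence argument does not provide. Both arguments are valid, and your nonemptiness check (the point mass on $J_0=0$) correctly covers the one case where feasibility could conceivably fail.
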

\begin{proof}
  See Appendix \ref{Appendix_2}.
\end{proof}
%
%
%
\subsection{Existence of Jamming Power Threshold}
\label{SubSec:Existence_of_Jamming_Threshold}
The following theorem proves the existence of a jamming power threshold and gives an upper bound for it. In Section \ref{Sec:Game_Analysis} we use the Theorem \ref{Theorem:Threshold} to derive the optimal mixed-strategies for the transmitter and the jammer in the general case.
\begin{theorem}
  \label{Theorem:Threshold}
  Let us assume we have a constrained two-player zero-sum game defined by the utility function $Z(J_T,J)$, given in \eqref{Eq:Utility_Function}, the payoff matrix, $Z$, given in \eqref{Eq:Game_Matrix}, the transmitter's mixed-strategy $ \boldsymbol{x} \in \mathbf{X} $, and the jammer's mixed-strategy
    $\boldsymbol{y} \in \mathbf{Y}_{\mathrm{LE}|J_{\mathrm{ave}}}$, given in \eqref{Eq:Jammer_Mixed_Set_LE}. Then, there exists a jammer threshold power $J_{\mathrm{TH}}$, $0 < J_{\mathrm{TH}} < J_{\max}$ 
  such that, if $J_{\mathrm{ave}} \geq J_{\mathrm{TH}}$ then,  there exists $ \boldsymbol{y}^* \in \mathbf{Y}_{\mathrm{LE}|J_{\mathrm{ave}}}$ and
  \begin{equation}\label{Eq:Trans_Optimal_Mixed_For_Powerful_Jammer}
    \begin{aligned}
      & \boldsymbol{x}^{*^T} = \begin{bmatrix} \boldsymbol{0}_{1\times N_T} &  1 \\ \end{bmatrix}_{1\times(N_T+1)}\\
      & Z(\boldsymbol{x}^*,\boldsymbol{y}^*) = Z_{N_T}
    \end{aligned}
  \end{equation}
\end{theorem}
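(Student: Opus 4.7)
The plan is to exhibit an explicit NE of the prescribed form and read the threshold $J_{\mathrm{TH}}$ off the feasibility constraint on $\boldsymbol{y}$. I would take $\boldsymbol{x}^{*}$ as specified in \eqref{Eq:Trans_Optimal_Mixed_For_Powerful_Jammer} (all mass on the most robust transmitter action) and look for a two-point jammer mix supported on the extreme powers $J=0$ and $J=J_{\max}$,
\[
\boldsymbol{y}^{*T} \;=\; \bigl[\,1-p,\;0,\;\ldots,\;0,\;p\,\bigr]_{1\times(N_T+1)},
\]
whose expected power is $p\,J_{\max}$; so $\boldsymbol{y}^{*}\in\mathbf{Y}_{\mathrm{LE}|J_{\mathrm{ave}}}$ exactly when $p\leq J_{\mathrm{ave}}/J_{\max}$.

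The next step is to choose $p$ so that $\boldsymbol{x}^{*}$ is the transmitter's best response to $\boldsymbol{y}^{*}$. Because $Z$ in \eqref{Eq:Game_Matrix} is lower triangular with bottom row constant at $Z_{N_T}$, the expected payoff of pure row $i$ against $\boldsymbol{y}^{*}$ equals $Z_i(1-p)$ for $i<N_T$ (the last column contributes $0$) and equals $Z_{N_T}$ for $i=N_T$. Thus row $N_T$ dominates iff $Z_{N_T}\geq Z_i(1-p)$ for every $i<N_T$, and by the ordering \eqref{Eq:Decreasing_Z_I} the binding inequality is $i=0$, giving $p\geq 1-Z_{N_T}/Z_0$. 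This identifies
\[
J_{\mathrm{TH}} \;\stackrel{\Delta}{=}\; \Bigl(1-\tfrac{Z_{N_T}}{Z_0}\Bigr)J_{\max}
\]
as the candidate threshold, and the strict ordering $0<Z_{N_T}<Z_0$ from \eqref{Eq:Decreasing_Z_I} yields $0<J_{\mathrm{TH}}<J_{\max}$.

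For $J_{\mathrm{ave}}\geq J_{\mathrm{TH}}$ I would then set $p = 1-Z_{N_T}/Z_0$, so that $pJ_{\max} = J_{\mathrm{TH}} \leq J_{\mathrm{ave}}$ places $\boldsymbol{y}^{*}$ in $\mathbf{Y}_{\mathrm{LE}|J_{\mathrm{ave}}}$, and verify the NE in two directions. Against $\boldsymbol{x}^{*}$ every entry of the selected row equals $Z_{N_T}$, so the expected payoff is constant in $\boldsymbol{y}$ and every feasible $\boldsymbol{y}$ --- in particular $\boldsymbol{y}^{*}$ --- is a jammer best response; against $\boldsymbol{y}^{*}$ the calculation above shows $\boldsymbol{x}^{*}$ attains the row-maximum $Z_{N_T}$. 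This delivers $Z(\boldsymbol{x}^{*},\boldsymbol{y}^{*})=Z_{N_T}$, as claimed.

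I do not expect a serious obstacle. The only genuine insight is that the extremal two-point distribution is the correct candidate: placing mass on any intermediate level $J_j<J_{\max}$ fails to zero out additional rows of $Z$ beyond those that $J=J_{\max}$ already zeroes, and so merely squanders the average-power budget. Everything else is index bookkeeping after the square-reduction of Lemma~\ref{Lem:N_T}, and the construction also yields $\bigl(1-Z_{N_T}/Z_0\bigr)J_{\max}$ as the explicit upper bound on $J_{\mathrm{TH}}$ that the theorem advertises.
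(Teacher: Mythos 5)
Your proposal is correct and follows essentially the same route as the paper: the paper's $\dot{\boldsymbol{y}}$ in \eqref{Eq:Non_Optimal_Strategy} is exactly your two-point mix on $\{0,J_{\max}\}$ with $p=1-Z_{N_T}/Z_0$, and its average power \eqref{Eq:Average_Jamming_Power_For_Non_Optimal_Strategy} is your candidate threshold $\bigl(1-Z_{N_T}/Z_0\bigr)J_{\max}$. Your explicit two-sided best-response check is a slightly cleaner packaging of the paper's one-sided bound $Z(\boldsymbol{x},\dot{\boldsymbol{y}})\leq Z_{N_T}$ combined with the trivial guarantee from the all-$Z_{N_T}$ bottom row, but the substance is identical.
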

Where $\boldsymbol{x}^*,\boldsymbol{y}^*$ are the transmitter's and jammer's optimal mixed-strategies, respectively, and $ Z(\boldsymbol{x}^*,\boldsymbol{y}^*) $ is the game value at the NE (we use these notations throughout the paper).

Theorem \ref{Theorem:Threshold} states that there exits a jamming power threshold, $J_{\text{TH}}$, such that if the jammer's average power exceeds~$J_{\text{TH}}$ then the transmitter's optimal mixed-strategy is to use the strategy corresponding to jammer's maximum power, resulting in lowest payoff; as if the jammer was using its maximum jamming power all the time. 
\begin{proof}
 Assume the jammer is using a mixed-strategy, $\dot{\boldsymbol{y}}$, which is not necessarily optimal, and is defined by
\begin{equation}
\label{Eq:Non_Optimal_Strategy}
  \begin{aligned}
    \dot{\boldsymbol{y}}^{T} 
			 & =  \begin{bmatrix}
			           \dot{y}_0 & 0 & \cdots & 0 & \dot{y}_{N_T} \\
			          \end{bmatrix}_{1 \times (N_T+1)} \\
			 & = Z_{N_T} \left[ \begin{smallmatrix} 
				  {\scriptstyle Z^{-1}_{0}} & {\scriptstyle 0} & \cdots & 0 & \left( Z^{-1}_{N_T} - Z^{-1}_{0} \right)  
				 \end{smallmatrix} \right]
  \end{aligned}
\end{equation}
It can easily be verified that \eqref{Eq:Non_Optimal_Strategy} is indeed a valid probability distribution (since by assumption we have $ {\textstyle Z_{N_T} < Z_0 }$). Let $\dot{J}_{\text{ave}}$ be the average jamming power of the strategy $\dot{\boldsymbol{y}}$;
\begin{equation}
\label{Eq:Average_Jamming_Power_For_Non_Optimal_Strategy}
 \dot{J}_{\text{ave}} = \sum^{N_T}_{j=0} \dot{y}_j J_j = \left( 1 - \frac{Z_{N_T}}{Z_0} \right) J_{\max} < J_{\max}
\end{equation}
Furthermore, assume the transmitter is using an arbitrary mixed-strategy $\boldsymbol{x} \in \mathbf{X}$ against jammer's strategy $\dot{\boldsymbol{y}}$ defined in \eqref{Eq:Non_Optimal_Strategy}. Define $Z(\boldsymbol{x}, \dot{\boldsymbol{y}})$ to be the expected payoff of the game for the mixed-strategy pair $\left( \boldsymbol{x}, \dot{\boldsymbol{y}} \right)$; 
\begin{align}
\label{Eq:Expected_Payoff_of_the_Game_for_Non_Optimal_Startegy}
    Z(\boldsymbol{x}, \dot{\boldsymbol{y}}) & = \boldsymbol{x}^T Z \ \dot{\boldsymbol{y}} 
		 = Z_{N_T} \left[ (x_0 + x_{N_T}) + \sum^{N_T-1}_{i=1} \left(\frac{Z_i}{Z_0} \right) x_i \right] \notag \\
		& \qquad \leq Z_{N_T} \left(x_0 + x_{N_T}\right) \leq Z_{N_T}
\end{align}
Since by assumption we have $\frac{Z_i}{Z_0} < 1$ for all $0 < i \leq N_T$ and $ 0 \leq x_i \leq 1$ for all $0\leq i \leq N_T$. Additionally, the equality in \eqref{Eq:Expected_Payoff_of_the_Game_for_Non_Optimal_Startegy} holds if and only if $x_0+x_{N_T} =1$
Thus, by using the mixed-strategy, $\dot{\boldsymbol{y}}$, and an average power $\dot{J}_{\text{ave}} \leq J_{\max}$ given in \eqref{Eq:Average_Jamming_Power_For_Non_Optimal_Strategy}, the jammer can force a payoff at most equal to the transmitter's lowest payoff, $Z_{N_T}$. 
\end{proof}
The jamming power given in \eqref{Eq:Average_Jamming_Power_For_Non_Optimal_Strategy} is not necessarily the lowest possible threshold. In section \ref{SubSec:Optimal_Mixed_Strategies_J_Ave_Greater_Than_J_TH} we provide a closed form expression for the lowest average jamming power, $J_\text{TH}$, (jamming \emph{threshold}) that can force the payoff $Z_{N_T}$.
%
%
%
%
%
%
%
\section{Game Analysis}
\label{Sec:Game_Analysis}
In this section we study the optimal strategies for the transmitter and the jammer. We divide this section into two subsections; the case where the average jamming power is less than the jamming threshold as defined in Section \ref{Sec:Game_Characterization}, and the case where it is greater than or equal to the jamming threshold.
As we will show in Appendix \ref{Appendix_2}, the standard linear programing techniques that are used to solve standard two-player zero-sum games can be appropriately modified to solve two-player zero-sum games with linear constraints. 
Therefore, even though all the results in this section are derived analytically, all constrained two-payer zero-sum games can also be solved numerically. 
%
%
%
\subsection{Optimal Mixed-Strategies for $J_{\mathrm{ave}} < J_{\mathrm{TH}}$}
\label{SubSec:Optimal_Mixed_Strategies_J_Ave_Less_Than_J_TH}
We start by developing the optimal strategies for specific values of average jamming power, denoted by $J_{\text{ave},m}$, $m =  0,\cdots , N_T-1$. As we will show, for these specific average jamming powers, the expected payoff of the game at the NE is equal to $Z_m$, $m = 0,\cdots , N_T-1$.
Later we extend our analytic results to the case where average jamming power is not necessarily equal to $J_{\text{ave},m}$.
%
%
\subsubsection{$J_{\mathrm{ave}} = J_{\mathrm{ave},m} <J_{\mathrm{TH}}$, for some $m = 0,\cdots , N_T-1$}\label{SubSubSec:J_Ave_m}
Since the average jamming power is less than the jamming threshold, the expected value of the game at the NE is in the interval $( Z_{N_T},Z_0]$. Assume, for now, the average jamming power, $J_{\text{ave}} < J_{\text{TH}}$, is such that the optimal strategy for the jammer is to use $(m+1)$ of his pure strategies (the support set of the jammer's mixed-strategy is $y_0 \leq y \leq y_m$), i.e.,
\begin{equation} 
\renewcommand{\arraystretch}{.2}
\setlength{\arraycolsep}{2pt}
 \begin{aligned}
  & \boldsymbol{y}^T = \begin{bmatrix}
                        y_0 & \dots & y_j & \dots & y_m & \boldsymbol{0}_{1\times (N_T-m)}
                       \end{bmatrix}_{1\times (N_T+1)}\\
 \end{aligned}
\end{equation}
where $0\leq m<N_T$ and $\boldsymbol{0}_{1 \times (N_T-m)}$ indicates a row vector of $(N_T-m)$ zeros. 
Intuitively, as the jammer's average power increases, he is able to use pure strategies with higher jamming power.
Define the mixed-strategy of the jammer, $\widehat{\boldsymbol{y}}$, as 
\begin{equation} 
  \widehat{\boldsymbol{y}}^T = Z_m
  \left[
 \begin{smallmatrix}\label{Eq:Y_Hat}
  {\scriptstyle Z_0^{-1}}&
  {\scriptstyle \cdots} &
  {\scriptstyle Z_j^{-1} - Z_{j-1}^{-1}} &
  {\scriptstyle \cdots} &
  {\scriptstyle Z_m^{-1} - Z_{m-1}^{-1}} &
  {\textstyle \boldsymbol{0}_{1\times (N_T-m)}}
 \end{smallmatrix}
 \right]
\end{equation}
It can be easily verified that \eqref{Eq:Y_Hat} is indeed a probability distribution. The average power corresponding to $\widehat{\boldsymbol{y}}$ is
\begin{equation}\label{Eq:Y_Hat_Averag_Power}
 J_{\text{ave},m} \stackrel{\Delta}{=} \boldsymbol{J}^T \widehat{\boldsymbol{y}} = Z_m \sum^{m}_{j=1} \left( Z_j^{-1} - Z_{j-1}^{-1} \right) J_j  
\end{equation}
Assume the jammer is using the mixed-strategy $\widehat{\boldsymbol{y}}$ 
against the transmitter's arbitrary distribution, $\boldsymbol{x} \in \mathbf{X}$. The resulting expected payoff of the game is
\begin{equation}
\label{Eq:Expected_Payoff_for_Y_Hat}
  \begin{aligned}
   Z \big( \boldsymbol{x}, \widehat{\boldsymbol{y}} \big) = \boldsymbol{x}^TZ\boldsymbol{y}
					         = Z_m \cdot \boldsymbol{x}^T \begin{bmatrix}
					                                       {\textstyle \boldsymbol{1}_{(m+1)\times 1}}\\
					                                       {\scriptstyle Z_{m+1}Z_m^{-1} }\\
					                                       \vdots\\
									       {\scriptstyle Z_j Z_m^{-1} }\\
									       \vdots\\
					                                       {\scriptstyle Z_{N_T}Z_m^{-1} }\\
					                                      \end{bmatrix}  \leq Z_m
  \end{aligned} 
\end{equation}
Since $Z_i < Z_j$ for $i > j$, it is clear from \eqref{Eq:Expected_Payoff_for_Y_Hat} that the optimal strategy for the transmitter, against $\widehat{\boldsymbol{y}}$ , is to use at most the same number of pure strategies, $(m+1)$, i.e.
\begin{equation}
\label{Eq:X_M}
\setlength{\arraycolsep}{2pt}
    \boldsymbol{x}^T = \begin{bmatrix}
                        x_0 & \dots & x_i & \dots & x_m & \boldsymbol{0}_{1\times (N_T-m)}
                       \end{bmatrix}_{1\times (N_T+1)}
\end{equation}
any strategy other than \eqref{Eq:X_M} results in a lower expected payoff for the transmitter.

Since $ \widehat{\boldsymbol{y}}$ given in \eqref{Eq:Y_Hat} is not necessarily an optimal mixed-strategy for the jammer, the optimal strategy would result in an expected payoff less than $Z_m$. Therefore, $Z_m$ can be used as an upper bound for the game value at NE and all mixed-strategies with average jamming power $J_{\text{ave}} = J_{\text{ave},m}$. We present this fact in the following lemma.
\begin{lemma}\label{Lemma:Game_Upper_Bound}
Assume the jammer's average power is given by \eqref{Eq:Y_Hat_Averag_Power}. Then for the constrained two-player zero-sum game defined in theorem \ref{Theorem:Threshold}, the following inequality holds
\begin{equation}\label{Eq:Game_Upper_Bound}
  Z \big( \boldsymbol{x}^*, \boldsymbol{y}^* \big) \leq Z \big( \boldsymbol{x}, \widehat{\boldsymbol{y}} \big) \leq Z_m \qquad
      \begin{aligned}
       & J_{\mathrm{ave}} = J_{\mathrm{ave},m}\\
       & \forall \boldsymbol{x} \in \mathbf{X}\\
      \end{aligned}
\end{equation}
\end{lemma}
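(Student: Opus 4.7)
The plan is to establish the two inequalities separately, handling the right one by direct computation and the left one by invoking the minimax characterization of the Nash equilibrium, whose existence is guaranteed by Lemma \ref{Lemma:Nash_Existence}.

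For the right inequality $Z(\boldsymbol{x},\widehat{\boldsymbol{y}}) \leq Z_m$, I would compute the vector $Z\widehat{\boldsymbol{y}}$ componentwise, exploiting the lower-triangular structure of the payoff matrix $Z$ in \eqref{Eq:Game_Matrix} together with the telescoping form of $\widehat{\boldsymbol{y}}$ in \eqref{Eq:Y_Hat}. For any row index $i \leq m$, the $i$-th entry of $Z\widehat{\boldsymbol{y}}$ is $Z_i \sum_{j=0}^{i} \widehat{y}_j$, and the partial telescoping sum reduces to $Z_m/Z_i$, so the entry equals $Z_m$ exactly. For rows $i > m$, the sum runs over the whole support of $\widehat{\boldsymbol{y}}$ and equals $1$, so the entry is $Z_i$, which is strictly less than $Z_m$ by the ordering in \eqref{Eq:Decreasing_Z_I}. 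Taking the inner product with any $\boldsymbol{x} \in \mathbf{X}$ and using $\sum_i x_i = 1$, the bound $\boldsymbol{x}^T Z \widehat{\boldsymbol{y}} \leq Z_m$ follows immediately; this essentially tightens and formalizes the calculation already sketched in \eqref{Eq:Expected_Payoff_for_Y_Hat}.

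For the left inequality, I would first verify that $\widehat{\boldsymbol{y}}$ is a feasible mixed-strategy for the jammer, i.e., $\widehat{\boldsymbol{y}} \in \mathbf{Y}_{\mathrm{LE}|J_{\mathrm{ave},m}}$: the nonnegativity and the normalization $\sum_j \widehat{y}_j = 1$ follow from the same telescoping identity, while the average-power constraint is met with equality by the very definition of $J_{\mathrm{ave},m}$ in \eqref{Eq:Y_Hat_Averag_Power}. Since $\widehat{\boldsymbol{y}}$ is feasible and $\boldsymbol{y}^*$ is optimal for the minimizer, the minimax characterization gives
\begin{equation*}
Z(\boldsymbol{x}^*,\boldsymbol{y}^*) \;=\; \min_{\boldsymbol{y}\in \mathbf{Y}_{\mathrm{LE}|J_{\mathrm{ave},m}}} \, \max_{\boldsymbol{x}\in \mathbf{X}} Z(\boldsymbol{x},\boldsymbol{y}) \;\leq\; \max_{\boldsymbol{x}\in \mathbf{X}} Z(\boldsymbol{x},\widehat{\boldsymbol{y}}) \;\leq\; Z_m,
\end{equation*}
where the final step reuses the componentwise bound just established. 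Reading the left inequality of \eqref{Eq:Game_Upper_Bound} as the bound against the $\boldsymbol{x}$ attaining this maximum (or interpreting the chain through $\max_{\boldsymbol{x}}$) completes the argument.

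The main obstacle is nothing deep but rather keeping the telescoping identity $\sum_{j=0}^{i}\widehat{y}_j = Z_m/Z_i$ honest for all $i \leq m$, including the boundary case $i=0$; once that bookkeeping is clean, both inequalities drop out and the feasibility check for $\widehat{\boldsymbol{y}}$ is immediate. The only subtle point worth flagging is that the lemma implicitly assumes $J_{\mathrm{ave},m} < J_{\mathrm{TH}}$, so the bound $Z_m$ is nontrivial (strictly greater than $Z_{N_T}$); otherwise Theorem \ref{Theorem:Threshold} would already pin the value at $Z_{N_T}$.
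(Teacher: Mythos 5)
Your proof is correct and follows essentially the same route as the paper: compute $Z\widehat{\boldsymbol{y}}$ componentwise (entries equal to $Z_m$ for rows $i\leq m$ by the telescoping sum, and $Z_i<Z_m$ for rows $i>m$), conclude $\boldsymbol{x}^TZ\widehat{\boldsymbol{y}}\leq Z_m$ for any probability vector $\boldsymbol{x}$, and then bound the NE value by the feasibility of $\widehat{\boldsymbol{y}}$ in $\mathbf{Y}_{\mathrm{LE}|J_{\mathrm{ave},m}}$. Your remark that the left inequality should be read through $\max_{\boldsymbol{x}}$ (it fails literally for, e.g., $\boldsymbol{x}$ concentrated on a row $i>m$) is a fair and correct reading of what the paper intends.
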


As discussed above, the optimal strategy for the transmitter against $\widehat{\boldsymbol{y}}$ is to use, at most, $(m+1)$ of his strategies. Define the mixed-strategy $\widehat{\boldsymbol{x}}$ for the transmitter as
\begin{equation}
\label{Eq:X_Hat}
\setlength{\arraycolsep}{2pt}
  \begin{aligned}
    \widehat{\boldsymbol{x}}^T & = \begin{bmatrix}
				    x_0 & \cdots & x_i & \cdots & x_m & \boldsymbol{0}_{1\times (N_T-m)}
			           \end{bmatrix}\\
			       & = \begin{bmatrix}
				    b_0Z_0^{-1} & \cdots & b_iZ_i^{-1} & \cdots & b_mZ_m^{-1} & \boldsymbol{0}_{1\times (N_T-m)}
			           \end{bmatrix}\\
  \end{aligned}
\end{equation}
where selection of $b_i$'s will be discussed later. 
Since $\widehat{\boldsymbol{x}}$ is a probability distribution, we have
\begin{equation} \label{Eq:Sum_X_Hat}
 \sum^{m}_{i=0} x_i = \sum^{m}_{i=0} b_i Z_i^{-1} = 1 \qquad b_i > 0 \quad i=0,\dots,m
\end{equation}

Assume the transmitter is using the mixed-strategy $\widehat{\boldsymbol{x}}$, which is not necessarily optimal, against jammer's arbitrary strategy $\boldsymbol{y} \in \mathbf{Y}_{\text{LE}|J_{\text{ave}}}$. The expected payoff of the game would be
\begin{equation}
\label{Eq:Expected_Payoff_for_X_Hat}
 \begin{aligned}
  & Z\big( \widehat{\boldsymbol{x}}, \boldsymbol{y} \big) = \widehat{\boldsymbol{x}}^T Z \ \boldsymbol{y} 
	 = y_0  \sum^{m}_{i=0} b_i + \dots + y_j \sum^{m}_{i=j} b_i + \dots y_m b_m \\
	 & = B \sum^{m}_{j=0} y_j -  \left( y_0b_0 + \dots + y_j \sum^{j-1}_{i=0} b_i + \dots + y_m \sum^{m-1}_{i=0} b_i \right)
 \end{aligned}
\end{equation}
where $B = \sum^{m}_{i=0} b_i$. Since we assumed $J_{\text{ave}} < J_{\text{TH}}$, a rational jammer would use all his available power, i.e.
\begin{equation*}
 \boldsymbol{J}^T \boldsymbol{y} = \sum^{m}_{i=1} J_j y_j = J_{\text{ave}}\ \text{where} \
  \begin{aligned}
   &\boldsymbol{y} \in \mathbf{Y}_{\text{LE}|J_{\text{ave}}} \\
   &J_1 < \dots < J_j < \dots < J_m \\
  \end{aligned}
\end{equation*}
Let the sum of the terms in the parentheses in relation \eqref{Eq:Expected_Payoff_for_X_Hat} be proportional to the jammer's average power, i.e.,
\begin{equation}\label{Eq:B_I_Solution}
 \begin{aligned}
  & \sum^{j-1}_{i=0} b_i = d\cdot J_j \quad i=0,\dots,m-1 \qquad \text{where} \ d>0\\
  & \qquad \Rightarrow b_i = d(J_{i+1} - J_i) \quad i=0, \dots, m-1
 \end{aligned}
\end{equation}
then \eqref{Eq:Expected_Payoff_for_X_Hat} becomes independent of jammer's strategy and the expected payoff of the game is
\begin{equation}\label{Eq:Z_of_X_Hat_Y}
 \begin{aligned}
  Z\big( \widehat{\boldsymbol{x}}, \boldsymbol{y} \big) = \widehat{\boldsymbol{x}}^T Z \boldsymbol{y} 
	= & \sum^{m}_{i=0} b_i - d J_{\text{ave}}\\
	= & b_m + d \left( J_m - J_{\text{ave}} \right)\\
 \end{aligned}
 \quad \forall \boldsymbol{y} \in \mathbf{Y}_{\text{LE}|J_{\text{ave}}}
\end{equation}
It is clear from \eqref{Eq:Z_of_X_Hat_Y} that a rational jammer should use all his available power, $J_{\text{ave}}$, to achieve the lowest possible expected payoff against $\widehat{\boldsymbol{x}}$.
If we substitute $b_i$'s from~\eqref{Eq:B_I_Solution} in~\eqref{Eq:Sum_X_Hat} we~have
{\small
\begin{equation*}
 \begin{aligned}
  b_mZ_m^{-1} + d \sum^{m-1}_{j=0} \left(J_{j+1} - J_j \right) Z_j^{-1} =& 1\\
  b_m + dZ_m \left( \sum^{m-1}_{j=0} J_{j+1}Z_j^{-1} - \sum^{m-1}_{j=0}J_jZ_j^{-1} \right) =& Z_m\\
  b_m + dZ_m \left( \sum^{m}_{j=1} J_{j}Z_{j-1}^{-1} - \sum^{m}_{j=1}J_jZ_j^{-1} +J_mZ_m^{-1} \right) =& Z_m\\
\end{aligned}
\end{equation*}%
\begin{equation*}
 \begin{aligned}
  b_m + d \left( J_m - \sum^{m}_{j=1} \left(Z_j^{-1} - Z_{j-1}^{-1} \right)J_j \right) =& Z_m\\
 \end{aligned}
\end{equation*}
}
and finally from \eqref{Eq:Y_Hat_Averag_Power} 
\begin{equation}\label{Eq:Z_M_Relation}
 b_m + d(J_m - J_{\text{ave},m}) = Z_m
\end{equation}
If we substitute $J_{\text{ave}}$  with $J_{\text{ave},m}$ in \eqref{Eq:Z_of_X_Hat_Y} and use \eqref{Eq:Z_M_Relation} we have
\begin{equation}\label{Eq:Game_Lower_Bound_1}
  Z\big( \widehat{\boldsymbol{x}}, \boldsymbol{y} \big) = b_m + d(J_m - J_{\text{ave},m}) = Z_m \ \ \ \forall \boldsymbol{y} \in \mathbf{Y}_{\text{LE}|J_{\text{ave},m}}
\end{equation}
Since $\widehat{\boldsymbol{x}}$ given in \eqref{Eq:X_Hat} is not necessarily an optimal mixed-strategy, using the optimal strategy for the transmitter would result in an expected payoff greater than $Z_m$. Therefore, $Z_m$ could be used as a lower bound for the game value at the NE and all mixed-strategies with average jamming power $J_{\text{ave}} = J_{\text{ave},m}$. We present this result in the following lemma.

\begin{lemma}\label{Lemma:Game_lower_Bound}
Assume the jammer's average power is given by \eqref{Eq:Y_Hat_Averag_Power}. Then for the constrained two-player zero-sum game defined in theorem \ref{Theorem:Threshold}, the following inequality holds
\begin{equation}\label{Eq:Game_Lower_Bound_2}
  Z \big( \boldsymbol{x}^*, \boldsymbol{y}^* \big) \geq Z \big( \widehat{\boldsymbol{x}}, \boldsymbol{y} \big) \geq Z_m \qquad
      \begin{aligned}
       & J_{\mathrm{ave}} = J_{\mathrm{ave},m}\\
       & \forall \boldsymbol{y} \in \mathbf{Y}_{\mathrm{LE}|J_{\mathrm{ave},m}}\\
      \end{aligned}
\end{equation}
\end{lemma}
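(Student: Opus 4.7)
The plan is to exhibit an explicit (not necessarily optimal) transmitter strategy $\widehat{\boldsymbol{x}}$ whose worst-case expected payoff over all $\boldsymbol{y} \in \mathbf{Y}_{\text{LE}|J_{\text{ave},m}}$ is at least $Z_m$. The Nash value then inherits this lower bound by the max--min characterization,
\[
Z(\boldsymbol{x}^*,\boldsymbol{y}^*) \;=\; \max_{\boldsymbol{x}}\min_{\boldsymbol{y}} Z(\boldsymbol{x},\boldsymbol{y}) \;\geq\; \min_{\boldsymbol{y}} Z(\widehat{\boldsymbol{x}},\boldsymbol{y}) \;\geq\; Z_m,
\]
which is exactly the chain in \eqref{Eq:Game_Lower_Bound_2}.

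First I would construct $\widehat{\boldsymbol{x}}$ as in \eqref{Eq:X_Hat}--\eqref{Eq:B_I_Solution}: support on the first $m+1$ pure strategies with $\widehat{x}_i = b_i Z_i^{-1}$, setting $b_i = d(J_{i+1}-J_i)$ for $i<m$ and letting the probability constraint \eqref{Eq:Sum_X_Hat} fix $b_m$, with the scale $d>0$ left free for now. Exploiting the lower-triangular structure of $Z$, the quadratic form $\widehat{\boldsymbol{x}}^T Z\boldsymbol{y}$ collapses to $B\sum_{j=0}^{m} y_j - d\sum_{j=0}^{m} y_j J_j$, where $B := \sum_{i=0}^m b_i = b_m + dJ_m$; the columns $j>m$ contribute zero because their entries in rows $\le m$ vanish.

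Second, substituting the chosen $b_i$'s into \eqref{Eq:Sum_X_Hat}, reorganizing the resulting telescoping sums, and matching them against \eqref{Eq:Y_Hat_Averag_Power} delivers the key identity $b_m + d(J_m - J_{\text{ave},m}) = Z_m$. For any $\boldsymbol{y}$ supported on $\{0,\ldots,m\}$ that exhausts the budget, this gives $Z(\widehat{\boldsymbol{x}},\boldsymbol{y}) = B - dJ_{\text{ave},m} = Z_m$ exactly, independent of the jammer's allocation among low powers; under-spending the budget only pushes the payoff above $Z_m$ since the budget coefficient is $-d<0$.

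The step I expect to be the main obstacle is controlling jammer strategies that place mass $1-\alpha$ on powers $J_j>J_m$, whose columns are identically zero on the support of $\widehat{\boldsymbol{x}}$ and therefore naively allow the payoff to drop below $Z_m$. The budget constraint rescues us: each unit moved above $J_m$ consumes at least $J_{m+1}$ of power, forcing $\sum_{j\le m} y_j J_j \le J_{\text{ave},m} - (1-\alpha) J_{m+1}$. Plugging this into $Z(\widehat{\boldsymbol{x}},\boldsymbol{y}) = \alpha B - d\sum_{j\le m} y_j J_j$ and using $B = b_m + dJ_m$, the desired inequality $Z(\widehat{\boldsymbol{x}},\boldsymbol{y}) \geq Z_m$ reduces to $d(J_{m+1} - J_m) \geq b_m$, i.e., a lower bound on the free scale $d$. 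Since $b_m = Z_m - d(J_m - J_{\text{ave},m})$, this is compatible with the upper bound $d < Z_m/(J_m - J_{\text{ave},m})$ needed to keep $b_m>0$; the resulting interval for $d$ is non-empty because $J_{m+1}>J_m$, and any $d$ in it yields the uniform lower bound, completing the argument.
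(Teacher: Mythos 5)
Your construction is the same as the paper's: support $\widehat{\boldsymbol{x}}$ on the first $m+1$ strategies with $\widehat{x}_i=b_iZ_i^{-1}$, choose $b_i=d(J_{i+1}-J_i)$ so that the telescoping sum makes the payoff an affine function of the jammer's spent power, and use the normalization to extract the identity $b_m+d(J_m-J_{\mathrm{ave},m})=Z_m$; the max--min step then transfers the bound to the NE value exactly as in the paper. The one place you go beyond the paper's own argument is the case of jammer mass on powers above $J_m$: the paper simply asserts that a rational jammer would exhaust its budget and restricts attention to strategies supported on $\{0,\dots,m\}$, whereas the lemma quantifies over all of $\mathbf{Y}_{\mathrm{LE}|J_{\mathrm{ave},m}}$. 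Your accounting -- each unit of mass above $J_m$ costs at least $J_{m+1}$ of budget, reducing the claim to $d(J_{m+1}-J_m)\ge b_m$, which is compatible with $b_m>0$ and is in fact satisfied with equality by the paper's eventual choice $b_m=d(J_{m+1}-J_m)$, i.e.\ $d=Z_m/(J_{m+1}-J_{\mathrm{ave},m})$ -- closes that gap cleanly and makes the lemma hold as stated. No errors; if anything your version is the more complete one.
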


However, from lemma \ref{Lemma:Game_Upper_Bound} and for $J_{\text{ave}} = J_{\text{ave},m} $ we know the game value cannot be more than $Z_m$ and hence the game value at the NE is indeed $Z_m$ and $\widehat{\boldsymbol{x}}$ and $\widehat{\boldsymbol{y}}$ given in \eqref{Eq:X_Hat} and \eqref{Eq:Y_Hat} are optimal mixed-strategies for the transmitter and the jammer, respectively. 
Since by assumption, $m<N_T$, we have $J_{m+1} \in \mathcal{J}$, therefore we can let 
\begin{equation}
 b_m = d \left( J_{m+1} - J_m \right)
\end{equation}
and from \eqref{Eq:Game_Lower_Bound_1} we have
\begin{equation}\label{Eq:D_And_B_i_Relations}
 \begin{aligned}
  & d = \frac{Z-m}{J_{m+1} - J_{\text{ave},m}} \\
  & b_i = \frac{J_{i+1} - J_i}{J_{m+1} - J_{\text{ave},m}}Z_m\\
 \end{aligned}
 \quad \text{for} \quad
 \begin{aligned}
  & 0 \leq m < N_T\\
  & i = 0, \dots, m\\
 \end{aligned}
\end{equation}
substituting \eqref{Eq:D_And_B_i_Relations} in \eqref{Eq:X_Hat}, the transmitter's optimal mixed-strategy becomes
\begin{equation}
\renewcommand{\arraystretch}{.2} 
\label{Eq:Trans_Optimal Strategy}
 \widehat{\boldsymbol{x}} = \begin{bmatrix}
                             x_0\\
                             \vdots\\
                             x_i\\
                             \vdots\\
                             x_m\\
                             \boldsymbol{0}_{N_T-m)\times 1}\\
                            \end{bmatrix}
                             = {\textstyle \frac{Z_m}{J_{m+1} - J_{\text{ave},m}}} \begin{bmatrix} 
								{\scriptstyle \left( J_1 - J_0 \right) Z_0^{-1}}\\
								\vdots\\
								{\scriptstyle \left( J_{i+1} - J_i \right) Z_i^{-1}}\\
								\vdots\\
								{\scriptstyle \left( J_{m+1} - J_m \right) Z_m^{-1}}\\
				\boldsymbol{0}_{N_T-m)\times 1}\\
                            \end{bmatrix}
\end{equation}

We summarize the results  derived so far in the following theorem.
\begin{theorem}\label{Theorem:Weak_Jammer_Theorem}
 Consider the constrained two-player zero-sum game defined by utility function \eqref{Eq:Utility_Function}, payoff matrix \eqref{Eq:Game_Matrix}, and transmitter and jammer mixed-strategy sets $\mathbf{X}$ and $\mathbf{Y}_{\mathrm{LE}|J_{\mathrm{ave},m}}$, defined in \eqref{Eq:Trans_Mixed_Set} and \eqref{Eq:Jammer_Mixed_Set_LE}, respectively. Then, the expected value of the game at the Nash-Equilibrium is
\begin{equation*}
 Z \big( \boldsymbol{x}^*, \boldsymbol{y}^* \big) = Z_m
\end{equation*}
and $J_{\mathrm{ave},m}$ is given by
 \begin{equation*}
  J_{\mathrm{ave},m} = Z_m \sum^{m}_{j=1} \left( Z_j^{-1} - Z_{j-1}^{-1} \right) J_j \qquad 0 \leq m < N_T
 \end{equation*}
Furthermore, $\boldsymbol{x}^*$ and $\boldsymbol{y}^*$ are given by \eqref{Eq:Trans_Optimal Strategy} and \eqref{Eq:Y_Hat}, respectively. 
\end{theorem}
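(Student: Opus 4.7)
The plan is to sandwich the Nash equilibrium value between matching upper and lower bounds, both equal to $Z_m$, using the two candidate strategies $\widehat{\boldsymbol{y}}$ and $\widehat{\boldsymbol{x}}$ already constructed in the preamble. Since Lemma \ref{Lemma:Nash_Existence} guarantees at least one NE exists, it suffices to show that any NE pair $(\boldsymbol{x}^*, \boldsymbol{y}^*)$ must yield payoff exactly $Z_m$, and then to verify that $(\widehat{\boldsymbol{x}}, \widehat{\boldsymbol{y}})$ actually achieves this value and belongs to the respective strategy sets.

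First, I would invoke Lemma \ref{Lemma:Game_Upper_Bound}: by exhibiting the specific jammer strategy $\widehat{\boldsymbol{y}}$ from \eqref{Eq:Y_Hat}, one checks that $\boldsymbol{J}^T\widehat{\boldsymbol{y}} = J_{\text{ave},m}$ so $\widehat{\boldsymbol{y}} \in \mathbf{Y}_{\text{LE}|J_{\text{ave},m}}$, and the computation in \eqref{Eq:Expected_Payoff_for_Y_Hat} shows $\boldsymbol{x}^T Z \widehat{\boldsymbol{y}} \leq Z_m$ for every $\boldsymbol{x} \in \mathbf{X}$. Since at the NE the transmitter maximizes against $\boldsymbol{y}^*$ and the jammer minimizes against $\boldsymbol{x}^*$, the minimax value satisfies $Z(\boldsymbol{x}^*, \boldsymbol{y}^*) \leq \max_{\boldsymbol{x}} \boldsymbol{x}^T Z \widehat{\boldsymbol{y}} \leq Z_m$.

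Next, I would invoke Lemma \ref{Lemma:Game_lower_Bound} using the transmitter strategy $\widehat{\boldsymbol{x}}$ of \eqref{Eq:X_Hat} with the specific $b_i$ values from \eqref{Eq:D_And_B_i_Relations}. The key algebraic step is the telescoping identity in the derivation following \eqref{Eq:Expected_Payoff_for_X_Hat}: choosing $\sum_{i=0}^{j-1} b_i = d\,J_j$ makes $Z(\widehat{\boldsymbol{x}}, \boldsymbol{y})$ collapse to $b_m + d(J_m - J_{\text{ave}})$, which is independent of $\boldsymbol{y}$ provided the jammer uses all its average power. Substituting $J_{\text{ave}} = J_{\text{ave},m}$ and simplifying via \eqref{Eq:Z_M_Relation} yields $\boldsymbol{x}^T Z \boldsymbol{y} \geq Z_m$ on the equality hyperplane, and the inequality extends to all $\boldsymbol{y} \in \mathbf{Y}_{\text{LE}|J_{\text{ave},m}}$ since a rational jammer in the regime $J_{\text{ave},m} < J_{\text{TH}}$ will expend its full budget (if it spends less, the terms in the parentheses of \eqref{Eq:Expected_Payoff_for_X_Hat} only shrink, raising the payoff). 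Hence $Z(\boldsymbol{x}^*, \boldsymbol{y}^*) \geq Z_m$.

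Combining the two bounds forces $Z(\boldsymbol{x}^*, \boldsymbol{y}^*) = Z_m$, and since $\widehat{\boldsymbol{y}}$ achieves the upper bound while $\widehat{\boldsymbol{x}}$ achieves the lower bound, both are optimal. The remaining routine checks are: (i) $\widehat{\boldsymbol{y}}$ is a valid probability vector, which follows from the monotonicity \eqref{Eq:Decreasing_Z_I} making each $Z_j^{-1} - Z_{j-1}^{-1} > 0$ and summing to $Z_m^{-1}$ after scaling; (ii) $\widehat{\boldsymbol{x}}$ is a valid probability vector, verified by the sum \eqref{Eq:Sum_X_Hat} together with the substitution \eqref{Eq:D_And_B_i_Relations}; (iii) the assumption $m < N_T$ ensures $J_{m+1} \in \mathcal{J}$ so the denominator $J_{m+1} - J_{\text{ave},m}$ in \eqref{Eq:D_And_B_i_Relations} is well-defined and positive. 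The main obstacle I anticipate is the bookkeeping in the telescoping manipulation of $b_m + d\sum(J_{j+1}-J_j)Z_j^{-1}$ into the form involving $J_{\text{ave},m}$; this is just careful index shifting but is where a sign error or off-by-one in the support size would derail the argument, so I would present it slowly with the index $j$ shift made explicit.
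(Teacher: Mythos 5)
Your proposal is correct and follows essentially the same route as the paper: the upper bound via the jammer strategy $\widehat{\boldsymbol{y}}$ of \eqref{Eq:Y_Hat} (Lemma \ref{Lemma:Game_Upper_Bound}), the lower bound via the transmitter strategy $\widehat{\boldsymbol{x}}$ of \eqref{Eq:X_Hat} with the equalizing choice $\sum_{i=0}^{j-1} b_i = d J_j$ and the telescoping reduction to \eqref{Eq:Z_M_Relation} (Lemma \ref{Lemma:Game_lower_Bound}), and the observation that the payoff $B - d\,\boldsymbol{J}^T\boldsymbol{y}$ is minimized only when the jammer spends its full average-power budget. The sandwich argument and the identification of $(\widehat{\boldsymbol{x}},\widehat{\boldsymbol{y}})$ as the optimal pair match the paper's development exactly.
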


If we define the effective jamming power, $J_{\text{eff}}$, to be the jamming power a \emph{reactive non-strategic jammer} (i.e., a jammer that uses only pure strategies) would need to force the same operating point at the NE ($Z_m$ in this case) then, the effective jamming power becomes
\begin{equation}\label{Eq:Effective_Jamming_Power_Week_Jammer}
 J_{\text{eff}} = J_m > J_{\text{ave},m} \qquad \text{for}\ 0 < m < N_T
\end{equation}
which means that randomizing helps the jammer to achieve the same performance as the reactive non-strategic jammer with less average jamming power.
%
%
\subsubsection{Optimal Strategies for the General Case}\label{SubSubSec:J_Ave}
Now that we have established the optimal mixed-strategies for $J_{\text{ave}} = J_{\text{ave},m}$, we consider the more general case where the jammer's average power is not necessarily equal to $J_{\text{ave},m}$ for some $0\leq m < N_T$. Obviously we have $J_{\text{ave},m} \leq J_{\text{ave}} < J_{\text{ave},m+1} $ for some $ m \in \big\{ 0,\dots, N_T-1 \big\} $. Let
\begin{equation}
\label{Eq:J_Ave_General}
 \begin{aligned}
  J_{\text{ave}} & = J_{\text{ave},m} + \varepsilon= Z_m \sum^{m}_{j=1} \left( Z_j^{-1} - Z_{j-1}^{-1} \right) J_j + \varepsilon \\
        & \qquad \qquad \qquad \text{for some}\  m \in \big\{ 0,\dots, N_T-1 \big\}\\
 \end{aligned}
\end{equation}
Define the mixed-strategy $\widehat{y}$ for the jammer as 
\begin{equation}
\renewcommand{\arraystretch}{.2}
\label{Eq:Y_Hat2}
  \widehat{\boldsymbol{y}} =a \cdot Z_m
 \begin{bmatrix}
  Z_0^{-1} \\
  \vdots \\
  Z_j^{-1} - Z_{j-1}^{-1} \\
  \vdots \\
  Z_m^{-1} - Z_{m-1}^{-1} \\
  \widehat{y}_{m+1}\\
  \boldsymbol{0}_{(N_T-m-1)\times 1}
 \end{bmatrix}_{(N_T+1) \times 1}
\end{equation}
Since $\widehat{\boldsymbol{y}}$ is a probability distribution we have
\begin{equation}
  \sum^{m+1}_{j=0} y_j = aZ_m \left( Z_m^{-1} +\widehat{y}_{m+1} \right)= 1 \Rightarrow a = \frac{1}{1+Z_m\widehat{y}_{m+1}}
\end{equation}
and from \eqref{Eq:Y_Hat2} we can rewrite the expression for the jammer's average power as
\begin{equation}
 \begin{aligned}
  J_{\text{ave}} = & \sum^{m+1}_{j=1} y_jJ_j = aJ_{\text{ave},m} + aZ_m\widehat{y}_{m+1}J_{m+1}\\
  \text{from \eqref{Eq:J_Ave_General}} \Rightarrow & \quad aJ_{\text{ave},m} + aZ_m\widehat{y}_{m+1}J_{m+1} = J_{\text{ave},m} + \varepsilon\\
 \end{aligned}
\end{equation}
and hence $a$ and $\widehat{y}_{m+1}$ become
\begin{equation}
  \widehat{y}_{m+1} = \frac{J_{\text{ave}}-J_{\text{ave},m}}{J_{m+1} - J_{\text{ave}}} Z_m^{-1} \ \text{and} \
  a  = \frac{J_{m+1} - J_{\text{ave}}}{J_{m+1} - J_{\text{ave},m}}\\
\end{equation}
Assume the jammer is using the mixed-strategy given in \eqref{Eq:Y_Hat2} against the transmitter's arbitrary strategy. Then the expected payoff of the game is
\begin{equation}
 Z\big( \boldsymbol{x}, \widehat{\boldsymbol{y}} \big)  = aZ_m  \boldsymbol{x}^T \begin{bmatrix}
											\boldsymbol{1}_{(m+1)\times 1}\\
											{\scriptstyle Z_{m+1}Z_m^{-1} \widehat{y}_{m+1} }\\
											\vdots\\
											\end{bmatrix}\\
       ={\scriptstyle \frac{J_{m+1} - J_{\text{ave}}}{J_{m+1} - J_{\text{ave},m}}} Z_m \\
\end{equation}
Where we have used the fact that a rational transmitter would only use up to $(m+1)$ of his strategies since otherwise the expected payoff of the game would be even less. As before, since $\widehat{\boldsymbol{y}}$ is not necessarily an optimal mixed-strategy, the corresponding expected payoff can be used as an upper bound for the game and hence, we have the following lemma.
\begin{table*}\label{Table:Game_Summary}
\centering \caption{Summary of the Results}
\renewcommand{\arraystretch}{.3}
\begin{tabular}{|c|c|}
 \hline
$Z\big( \boldsymbol{x}^*, \boldsymbol{y}^* \big) = {\boldsymbol{x}^*}^T Z \boldsymbol{y}^* = \frac{J_{m+1} - J_{\text{ave}}}{J_{m+1} - J_{\text{ave},m}} Z_m 
    \quad 
    J_{\text{ave},m} \leq J_{\text{ave}} < J_{\text{ave},m+1}$ &
    $J_{\text{ave},m} = Z_m \sum^{m}_{j=1} \left( Z_j^{-1} - Z_{j-1}^{-1} \right) J_j 
    \quad 
    m = 0,\cdots, N_T-1$\\
  \hline
   & \\
  Transmitter's Optimal Mixed-Strategy & Jammer's Optimal Mixed-Strategy\\
  $\boldsymbol{x}^* = \begin{bmatrix}
                       x_0\\
		       \vdots\\
		       x_i\\
		       \vdots\\
		       x_m\\
		       \boldsymbol{0}\\
                      \end{bmatrix}_{(N_T+1)\times 1}
		    = \frac{Z_m}{J_{m+1} - J_{\text{ave},m}} \begin{bmatrix}
							(J_1 - J_0) Z_0^{-1}\\
							:\\
							(J_{i+1} - J_i) Z_i^{-1}\\
							:\\
							(J_{m+1} - J_m) Z_m^{-1}\\
							\boldsymbol{0}\\
						       \end{bmatrix}
  $ &
  $
  \boldsymbol{y}^* = \begin{bmatrix}
                       y_0\\
		       \vdots\\
		       y_j\\
		       \vdots\\
		       y_m\\
		       y_{m+1}\\
		       \boldsymbol{0}\\
                      \end{bmatrix}_{(N_T+1)\times 1}
		   = \frac{J_{m+1} - J_{\text{ave}}}{J_{m+1} - J_{\text{ave},m}} Z_m \begin{bmatrix}
									Z_0^{-1 \qquad \qquad  }\\
									:\\
									Z_j^{-1} - Z_{j-1}^{-1}\\
									:\\
									Z_m^{-1} - Z_{m-1}^{-1}\\
									\frac{J_{\text{ave}} - J_{\text{ave},m}}{J_{m+1} - J_{\text{ave}}} Z_m^{-1}\\
									\boldsymbol{0}\\
								       \end{bmatrix}
  $\\
  \hline
\end{tabular}
\end{table*}
\begin{lemma}\label{Lemma:Game_Upper_Bound2}
Let us assume that the jammer's average power, $J_{\mathrm{ave}}$, satisfies $J_{\mathrm{ave},m} < J_{\mathrm{ave}} < J_{\mathrm{ave},m+1} $ for some $ m \in \left\{ 0,\dots, N_T-1 \right\} $. Then for the constrained two-player zero-sum game defined in theorem \ref{Theorem:Threshold}, and for all $\boldsymbol{x} \in \mathbf{X} $ the following inequality holds
\begin{equation}\label{Eq:Game_Upper_Bound2}
  Z \big( \boldsymbol{x}^*, \boldsymbol{y}^* \big) \leq Z \big( \boldsymbol{x}, \widehat{\boldsymbol{y}} \big) \leq \frac{J_{m+1} - J_{\mathrm{ave}}}{J_{m+1} - J_{\mathrm{ave},m}} Z_m \\
\end{equation}
where $\widehat{\boldsymbol{y}}$ is given in \eqref{Eq:Y_Hat2}.
\end{lemma}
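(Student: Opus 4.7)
The plan is to proceed in three stages: (i) verify that $\widehat{\boldsymbol{y}}$ of \eqref{Eq:Y_Hat2} lies in $\mathbf{Y}_{\mathrm{LE}|J_{\mathrm{ave}}}$; (ii) evaluate $Z\widehat{\boldsymbol{y}}$ componentwise and show every entry is bounded above by $aZ_m$, where $a=\tfrac{J_{m+1}-J_{\mathrm{ave}}}{J_{m+1}-J_{\mathrm{ave},m}}$; and (iii) combine this with the minimax inequality for the NE to obtain both halves of \eqref{Eq:Game_Upper_Bound2}.

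For (i), the components of $\widehat{\boldsymbol{y}}$ are nonnegative because $a>0$ and $\widehat{y}_{m+1}\geq 0$ (both follow from $J_{\mathrm{ave},m}\leq J_{\mathrm{ave}}<J_{m+1}$), and $Z_j^{-1}-Z_{j-1}^{-1}>0$ by \eqref{Eq:Decreasing_Z_I}. The derivations immediately preceding the lemma already show that the entries sum to one and that $\boldsymbol{J}^T\widehat{\boldsymbol{y}}=J_{\mathrm{ave}}$, so $\widehat{\boldsymbol{y}}\in\mathbf{Y}_{\mathrm{LE}|J_{\mathrm{ave}}}$.

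For (ii), since $Z$ is lower triangular with all nonzero entries of row $i$ equal to $Z_i$ (see \eqref{Eq:Game_Matrix}), I have $(Z\widehat{\boldsymbol{y}})_i = Z_i\sum_{j=0}^{i}\widehat{y}_j$. For $0\leq i\leq m$ the partial sum telescopes to $aZ_m Z_i^{-1}$, so $(Z\widehat{\boldsymbol{y}})_i = aZ_m$; for $i\geq m+1$ all of the jammer's mass is already accumulated, so $\sum_{j=0}^{i}\widehat{y}_j=1$ and $(Z\widehat{\boldsymbol{y}})_i = Z_i$. It therefore remains to show $Z_i\leq aZ_m$ for $i\geq m+1$, and by the strict monotonicity \eqref{Eq:Decreasing_Z_I} this reduces to the single inequality $Z_{m+1}\leq aZ_m$. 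This is the main obstacle. I would attack it by first deriving the identity
\begin{equation*}
J_{\mathrm{ave},m+1} \;=\; \frac{Z_{m+1}}{Z_m}\,J_{\mathrm{ave},m} \;+\; \Bigl(1-\frac{Z_{m+1}}{Z_m}\Bigr)J_{m+1},
\end{equation*}
obtained by splitting off the $j=m+1$ term from the formula for $J_{\mathrm{ave},m+1}$ given in Theorem \ref{Theorem:Weak_Jammer_Theorem} and factoring out $Z_{m+1}/Z_m$. Rearranging yields $\tfrac{J_{m+1}-J_{\mathrm{ave},m+1}}{J_{m+1}-J_{\mathrm{ave},m}} = \tfrac{Z_{m+1}}{Z_m}$, and the hypothesis $J_{\mathrm{ave}}<J_{\mathrm{ave},m+1}$ then gives $a > Z_{m+1}/Z_m$, i.e., $aZ_m > Z_{m+1}$, as required.

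For (iii), since $\boldsymbol{x}\in\mathbf{X}$ is a convex combination of basis vectors, $Z(\boldsymbol{x},\widehat{\boldsymbol{y}})=\boldsymbol{x}^T(Z\widehat{\boldsymbol{y}})\leq \max_i(Z\widehat{\boldsymbol{y}})_i = aZ_m$, which is the right inequality of \eqref{Eq:Game_Upper_Bound2}. The left inequality is just the minimax bound: because $\widehat{\boldsymbol{y}}$ is feasible, the NE value satisfies $Z(\boldsymbol{x}^*,\boldsymbol{y}^*)=\min_{\boldsymbol{y}\in\mathbf{Y}_{\mathrm{LE}|J_{\mathrm{ave}}}}\max_{\boldsymbol{x}\in\mathbf{X}}\boldsymbol{x}^T Z\boldsymbol{y} \leq \max_{\boldsymbol{x}\in\mathbf{X}}Z(\boldsymbol{x},\widehat{\boldsymbol{y}}) \leq aZ_m$, which together with the previous estimate completes the proof.
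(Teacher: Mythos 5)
Your proof is correct and follows essentially the same route as the paper: write $(Z\widehat{\boldsymbol{y}})_i = Z_i\sum_{j\le i}\widehat{y}_j$, note that the first $m+1$ entries telescope to $aZ_m$ with $a=\frac{J_{m+1}-J_{\mathrm{ave}}}{J_{m+1}-J_{\mathrm{ave},m}}$, and bound $Z(\boldsymbol{x},\widehat{\boldsymbol{y}})=\boldsymbol{x}^T(Z\widehat{\boldsymbol{y}})$ by the largest entry. The one substantive point on which you go further than the paper is the inequality $Z_i\le aZ_m$ for $i\ge m+1$: the paper dismisses it with the remark that ``a rational transmitter would only use up to $(m+1)$ of his strategies since otherwise the expected payoff of the game would be even less,'' whereas you identify it as the crux and actually prove it, via the identity $J_{m+1}-J_{\mathrm{ave},m+1}=\frac{Z_{m+1}}{Z_m}\left(J_{m+1}-J_{\mathrm{ave},m}\right)$ (obtained, as you say, by splitting off the $j=m+1$ term in \eqref{Eq:Y_Hat_Averag_Power}) together with the hypothesis $J_{\mathrm{ave}}<J_{\mathrm{ave},m+1}$. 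That identity checks out, and it is exactly where the upper-bound hypothesis on $J_{\mathrm{ave}}$ enters; without it the bound would fail for transmitter strategies placing mass on index $m+1$. Your reading of the left-hand inequality as the minimax bound $Z(\boldsymbol{x}^*,\boldsymbol{y}^*)\le\max_{\boldsymbol{x}}Z(\boldsymbol{x},\widehat{\boldsymbol{y}})$, using feasibility of $\widehat{\boldsymbol{y}}$, is also the right (charitable) interpretation of the statement. In short, your write-up is, if anything, more complete than the paper's own argument.
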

Now assume the transmitter is using the same mixed-strategy given in \eqref{Eq:X_Hat}. From \eqref{Eq:Z_of_X_Hat_Y} and \eqref{Eq:D_And_B_i_Relations} we have
\begin{equation}
 \begin{aligned}
  Z\big( \widehat{\boldsymbol{x}}, \boldsymbol{y} \big) & = b_m + d \left( J_m - J_{\text{ave}} \right)
							   = d \left( J_{m+1} - J_{\text{ave}} \right)\\
							  & = \frac{J_{m+1}-J_{\text{ave}}}{J_{m+1}-J_{\text{ave},m}}Z_m\\
 \end{aligned}
\end{equation}
which is the same expression for the upper bound of the game derived in Lemma \ref{Lemma:Game_Upper_Bound2}. As a result we have the following theorem.
\begin{theorem}\label{Theorem:Weak_Jammer_Theorem2}
 Consider a constrained two-player zero-sum game defined by utility function \eqref{Eq:Utility_Function}, payoff matrix \eqref{Eq:Game_Matrix}, and transmitter and jammer mixed-strategy sets $\mathbf{X}$ and $\mathbf{Y}_{\mathrm{LE}|J_{\mathrm{ave}}}$ given in \eqref{Eq:Trans_Mixed_Set} and \eqref{Eq:Jammer_Mixed_Set_LE}, respectively. Assume the jammer's average power $J_{\mathrm{ave}}$ satisfies
 \begin{equation}
  J_{\mathrm{ave},m} < J_{\mathrm{ave}} < J_{\mathrm{ave},m+1} \ \text{for some} \ m \in \left\{ 0,\dots, N_T-1 \right\}
 \end{equation}
Then, the expected value of the game at the NE is
\begin{equation}
 Z \big( \boldsymbol{x}^*, \boldsymbol{y}^* \big) = \frac{J_{m+1}-J_{\mathrm{ave}}}{J_{m+1}-J_{\mathrm{ave},m}}Z_m
\end{equation}
where $\boldsymbol{x}^*$ and $\boldsymbol{y}^*$ are the transmitter's and the jammer's optimal mixed-strategies, respectively, and $J_{\mathrm{ave},m}$ is given by \eqref{Eq:Y_Hat_Averag_Power}.  
Furthermore, $\boldsymbol{x}^*$ and $\boldsymbol{y}^*$ are given by \eqref{Eq:Trans_Optimal Strategy} and \eqref{Eq:Y_Hat2}, respectively.
\end{theorem}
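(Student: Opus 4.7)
The plan is to identify the game value at the Nash equilibrium by sandwiching it between matching upper and lower bounds, reusing the machinery already built up in Section \ref{SubSubSec:J_Ave_m}. Lemma \ref{Lemma:Game_Upper_Bound2} already supplies the upper bound: the explicit jammer strategy $\widehat{\boldsymbol{y}}$ of \eqref{Eq:Y_Hat2} lies in $\mathbf{Y}_{\mathrm{LE}|J_{\mathrm{ave}}}$ (since by construction its average power equals $J_{\mathrm{ave}}$), and against any $\boldsymbol{x}\in\mathbf{X}$ it forces $Z(\boldsymbol{x},\widehat{\boldsymbol{y}})\leq \frac{J_{m+1}-J_{\mathrm{ave}}}{J_{m+1}-J_{\mathrm{ave},m}}Z_m$. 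Because $\widehat{\boldsymbol{y}}$ is feasible, we obtain $Z(\boldsymbol{x}^*,\boldsymbol{y}^*)\leq \frac{J_{m+1}-J_{\mathrm{ave}}}{J_{m+1}-J_{\mathrm{ave},m}}Z_m$.

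For the matching lower bound I would reuse the transmitter's mixed strategy $\widehat{\boldsymbol{x}}$ from \eqref{Eq:X_Hat} with the coefficients $b_i$ prescribed by \eqref{Eq:B_I_Solution} and \eqref{Eq:D_And_B_i_Relations}. The key observation is that the derivation of \eqref{Eq:Z_of_X_Hat_Y} never used the specific value $J_{\mathrm{ave}}=J_{\mathrm{ave},m}$; it only exploited the structure of $\widehat{\boldsymbol{x}}$ and the fact that a rational jammer saturates his average-power budget (which holds here since $J_{\mathrm{ave}}<J_{\mathrm{TH}}$ and the payoff against $\widehat{\boldsymbol{x}}$ is strictly decreasing in expended power). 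Thus, for any $\boldsymbol{y}\in\mathbf{Y}_{\mathrm{LE}|J_{\mathrm{ave}}}$ with $\boldsymbol{J}^T\boldsymbol{y}=J_{\mathrm{ave}}$,
\begin{equation*}
  Z(\widehat{\boldsymbol{x}},\boldsymbol{y}) = b_m + d(J_m - J_{\mathrm{ave}}) = d(J_{m+1} - J_{\mathrm{ave}}),
\end{equation*}
and substituting $d = Z_m/(J_{m+1}-J_{\mathrm{ave},m})$ from \eqref{Eq:D_And_B_i_Relations} collapses this to $\frac{J_{m+1}-J_{\mathrm{ave}}}{J_{m+1}-J_{\mathrm{ave},m}}Z_m$. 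Hence $Z(\boldsymbol{x}^*,\boldsymbol{y}^*)\geq \frac{J_{m+1}-J_{\mathrm{ave}}}{J_{m+1}-J_{\mathrm{ave},m}}Z_m$, and combining with the previous paragraph identifies the game value.

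Finally, to conclude that the asserted strategies are optimal, I would note the standard minimax argument: $\widehat{\boldsymbol{y}}$ achieves the minimum of $\max_{\boldsymbol{x}}Z(\boldsymbol{x},\boldsymbol{y})$ and $\widehat{\boldsymbol{x}}$ achieves the maximum of $\min_{\boldsymbol{y}}Z(\boldsymbol{x},\boldsymbol{y})$, and by Lemma \ref{Lemma:Nash_Existence} the NE exists, so any such saddle-point realizer is itself an optimal strategy. Noting that $\widehat{\boldsymbol{x}}$ coincides with \eqref{Eq:Trans_Optimal Strategy} under the substitution \eqref{Eq:D_And_B_i_Relations} and that $\widehat{\boldsymbol{y}}$ is exactly \eqref{Eq:Y_Hat2} finishes the proof.

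The main obstacle I anticipate is purely algebraic bookkeeping: verifying that the probability-normalization condition on $\widehat{\boldsymbol{y}}$ in \eqref{Eq:Y_Hat2} fixes $a$ and $\widehat{y}_{m+1}$ to the exact values that make $\boldsymbol{J}^T\widehat{\boldsymbol{y}} = J_{\mathrm{ave},m}+\varepsilon$, and that the corresponding payoff telescopes cleanly through the $Z_{j}^{-1}-Z_{j-1}^{-1}$ differences. These computations are straightforward but must be carried out in the exact order shown in the paragraph preceding Lemma \ref{Lemma:Game_Upper_Bound2} so that both sides of the sandwich produce literally the same expression $\frac{J_{m+1}-J_{\mathrm{ave}}}{J_{m+1}-J_{\mathrm{ave},m}}Z_m$ rather than only equivalent ones.
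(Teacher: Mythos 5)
Your proposal is correct and follows essentially the same route as the paper: the upper bound via the feasible jammer strategy $\widehat{\boldsymbol{y}}$ of \eqref{Eq:Y_Hat2} (Lemma \ref{Lemma:Game_Upper_Bound2}), the matching lower bound by reusing $\widehat{\boldsymbol{x}}$ of \eqref{Eq:X_Hat} with the coefficients \eqref{Eq:D_And_B_i_Relations} so that $Z(\widehat{\boldsymbol{x}},\boldsymbol{y})=d(J_{m+1}-J_{\mathrm{ave}})$, and the sandwich identifying the game value. Your observation that the derivation of \eqref{Eq:Z_of_X_Hat_Y} never used $J_{\mathrm{ave}}=J_{\mathrm{ave},m}$ is exactly the step the paper takes implicitly.
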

%
%
%
\subsection{Optimal Mixed-Strategies for $J_{\mathrm{ave}} \geq J_{\mathrm{TH}}$}
\label{SubSec:Optimal_Mixed_Strategies_J_Ave_Greater_Than_J_TH}
Let us assume in lemma \ref{Lemma:Game_Upper_Bound} we let $m=N_T$, then we have 
\begin{equation}\label{Eq:Game_Upper_Bound_for_Powerful_Jammer}
  Z \big( \boldsymbol{x}^*, \boldsymbol{y}^* \big) \leq Z_{N_T} \qquad
      \begin{aligned}
       & J_{\text{ave}} = J_{\text{ave},N_T}\\
       & \forall \boldsymbol{x} \in \mathbf{X}\\
      \end{aligned}
\end{equation}
Since the game value cannot be less than $Z_{N_T}$, we conclude that $Z \big( \boldsymbol{x}^*, \boldsymbol{y}^* \big) = Z_{N_T}$. But from theorem \ref{Theorem:Threshold} we know that for $J_{\text{ave}} \geq J_{\text{TH}}$ the game value at NE is also $Z_{N_T}$ and since $J_{\text{ave},N_T}$ is the smallest average jamming power for which the game value at NE is equal to $Z_{N_T}$, we conclude that $J_{\text{TH}} = J_{\text{ave},N_T}$. We summarize this result in the following theorem.
\begin{theorem}\label{Theorem:Powerful_Jammer}
 Consider the constrained two-player zero-sum game defined by utility function \eqref{Eq:Utility_Function}, payoff matrix \eqref{Eq:Game_Matrix}, and transmitter and jammer mixed-strategy sets $\mathbf{X}$, and $\mathbf{Y}_{\mathrm{LE}}$, defined in \eqref{Eq:Trans_Mixed_Set} and \eqref{Eq:Jammer_Mixed_Set_LE}, respectively. Then, there exists a jamming power threshold, $J_{\mathrm{TH}} < J_{\max}$, such that 
 \begin{equation*}
  Z \big( \boldsymbol{x}^*, \boldsymbol{y}^* \big) = Z_{N_T} \qquad \forall J_{\mathrm{ave}} \geq J_{\mathrm{TH}}
 \end{equation*}
where the value of $J_{\mathrm{TH}}$ is given by
\begin{equation}\label{Eq:Jamming_Threshold}
 J_{\mathrm{TH}} = Z_{N_T} \sum^{N_T}_{j=1} \left( Z_j^{-1} - Z_{j-1}^{-1} \right) J_j  \qquad J_j \in \mathcal{J}
\end{equation}
Furthermore, the jammer's optimal mixed-strategy with the lowest average power, that can achieve the NE is given by
\begin{equation*}
\setlength{\arraycolsep}{2pt}
  \boldsymbol{y}^{*^T} = Z_{N_T}
 \begin{bmatrix}
  {\scriptstyle Z_0^{-1}} &
  {\scriptstyle \cdots} &
  {\scriptstyle Z_j^{-1} - Z_{j-1}^{-1}} &
  {\scriptstyle \cdots} &
  {\scriptstyle Z_{N_T}^{-1} - Z_{N_T-1}^{-1}} &
 \end{bmatrix}_{1\times (N_T+1)}
\end{equation*}
\end{theorem}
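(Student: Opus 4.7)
The plan is to combine Lemma \ref{Lemma:Game_Upper_Bound} with the $J_{\text{ave}} < J_{\text{TH}}$ analysis of Theorem \ref{Theorem:Weak_Jammer_Theorem2} to pin down both the equilibrium value $Z_{N_T}$ and the exact threshold $J_{\text{TH}}$. First I would specialize Lemma \ref{Lemma:Game_Upper_Bound} to $m = N_T$: the jammer's strategy $\widehat{\boldsymbol{y}}$ from \eqref{Eq:Y_Hat} with $m = N_T$ has average power exactly $J_{\text{ave},N_T} = Z_{N_T}\sum_{j=1}^{N_T}(Z_j^{-1} - Z_{j-1}^{-1})J_j$, and by \eqref{Eq:Expected_Payoff_for_Y_Hat} caps the transmitter's expected payoff at $Z_{N_T}$. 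When $m = N_T$, the bracket in \eqref{Eq:Expected_Payoff_for_Y_Hat} collapses to the all-ones vector, so the payoff equals $Z_{N_T}$ for \emph{every} transmitter strategy, making the transmitter completely indifferent. Conversely, the transmitter can always secure $Z_{N_T}$ by the pure strategy $[0,\dots,0,1]^T$, since every entry of the last row of $Z$ equals $Z_{N_T}$; hence the equilibrium value cannot drop below $Z_{N_T}$. Combining the two bounds yields $Z(\boldsymbol{x}^*, \boldsymbol{y}^*) = Z_{N_T}$ at $J_{\text{ave}} = J_{\text{ave},N_T}$.

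Next I would extend this conclusion to all $J_{\text{ave}} \geq J_{\text{ave},N_T}$ by a monotonicity observation: the feasible set $\mathbf{Y}_{\text{LE}|J_{\text{ave}}}$ is nested increasingly in $J_{\text{ave}}$, so $\widehat{\boldsymbol{y}}$ remains feasible at any larger budget and continues to pin the value to the floor $Z_{N_T}$. A telescoping estimate using $J_j < J_{\max}$ for $j < N_T$ gives $J_{\text{ave},N_T} < (1 - Z_{N_T}/Z_0)J_{\max} < J_{\max}$, confirming $J_{\text{TH}} < J_{\max}$ as claimed. To show $J_{\text{ave},N_T}$ is the \emph{smallest} such threshold, I would invoke Theorem \ref{Theorem:Weak_Jammer_Theorem2}: for any $J_{\text{ave}} \in [J_{\text{ave},m}, J_{\text{ave},m+1})$ with $m < N_T$, the NE value equals $\frac{J_{m+1}-J_{\text{ave}}}{J_{m+1}-J_{\text{ave},m}}Z_m$, which strictly exceeds $Z_{N_T}$ since the coefficient is positive and $Z_m > Z_{N_T}$. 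Thus no average budget below $J_{\text{ave},N_T}$ can force the payoff down to the floor, establishing $J_{\text{TH}} = J_{\text{ave},N_T}$ and formula \eqref{Eq:Jamming_Threshold}. The displayed jammer mixed-strategy in the theorem is then read off directly from \eqref{Eq:Y_Hat} at $m = N_T$, which by the average-power computation has the minimal budget among NE-achieving strategies.

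The main obstacle I anticipate is making the minimality argument fully airtight: Theorem \ref{Theorem:Weak_Jammer_Theorem2} is stated on open intervals, so I would need to handle the boundary points $J_{\text{ave}} = J_{\text{ave},m}$ for $m < N_T$ separately using Theorem \ref{Theorem:Weak_Jammer_Theorem} to rule out any smaller budget reaching $Z_{N_T}$. Everything else reduces to the indifference computation supplied by Lemma \ref{Lemma:Game_Upper_Bound} at $m = N_T$ together with the nested-feasibility monotonicity observation above.
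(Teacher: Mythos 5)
Your argument is correct and takes essentially the same route as the paper: specialize Lemma~\ref{Lemma:Game_Upper_Bound} to $m=N_T$ so that $\widehat{\boldsymbol{y}}$ (with average power exactly $J_{\mathrm{ave},N_T}$) caps the value at $Z_{N_T}$, observe that the all-$Z_{N_T}$ last row of $Z$ lets the transmitter secure $Z_{N_T}$ regardless of the jammer's play, and extend to all larger budgets by the nesting of $\mathbf{Y}_{\mathrm{LE}|J_{\mathrm{ave}}}$ (the paper invokes Theorem~\ref{Theorem:Threshold} for that last step, to the same effect). One small repair to your minimality step: the inference that $\frac{J_{m+1}-J_{\mathrm{ave}}}{J_{m+1}-J_{\mathrm{ave},m}}Z_m > Z_{N_T}$ ``since the coefficient is positive and $Z_m>Z_{N_T}$'' is not valid on its own (a sufficiently small positive coefficient would push the product below $Z_{N_T}$); the correct observation is that on $[J_{\mathrm{ave},m},J_{\mathrm{ave},m+1})$ the value decreases linearly from $Z_m$ toward its right-endpoint limit $Z_{m+1}\geq Z_{N_T}$ without attaining it, so it remains strictly above $Z_{N_T}$ for every $J_{\mathrm{ave}}<J_{\mathrm{ave},N_T}$ --- a point the paper itself asserts without proof.
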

In other words, if the average jamming power exceeds $J_{\text{TH}}$ given by \eqref{Eq:Jamming_Threshold}, then the \emph{optimal strategic jammer} (i.e., the jammer which uses optimal mixed-strategy) can force the expected payoff equal to the transmitter's lowest payoff at the NE.
This expected payoff is equal to the reactive non-strategic jammer with average power $J_{\max}$, i.e.,
 $ J_{\text{eff}} = J_{\max} > J_{\text{TH}}$.
%
%
%
%
%
%
%
\section{Special Case;  AWGN Channel Capacity as the\\ Utility Function}
\label{Sec:Special_Case}
%
%
%
%
%
In this section we study two typical jamming scenarios and we show that the framework defined in previous sections can be used to determine the optimal transmission and jamming strategies. Even though our analytical results are not limited to the AWGN channel, for simplicity, we use the packetized AWGN channel model and
we assume packets are long enough that channel capacity theorem could be applied to each packet being transmitted.
%
%
%
%
\subsection{Special Case I: A Transmitter with Fixed  Rates}%
\label{SubSec:Special_Case_I}%
In this section we study a special case of the game defined in the previous sections. We assume the communication link between the transmitter and the receiver is a sing-hop, packetized (discrete-time), AWGN channel with fixed and known noise variance, $N$. The communication link is being disrupted by an additive adversary. We assume the jammer is an additive Gaussian jammer with flat power spectral density. It can be shown \cite{CoverJ06} that in the AWGN channel with a fixed and known noise variance, an iid Gaussian jammer is the most effective jammer in minimizing the capacity between the transmitter and the receiver. The effect of the Gaussian jammer on the communication link is reduction of the effective signal to noise ratio (SNR) at the receiver from $\frac{P_T}{N}$ to $\frac{P_T}{N+J}$, where $J$ represents the jammer power (variance) and $P_T$ is the transmitter power.

The transmitter has a rate adaptation block which allows him to transmit at $(N_T+1)$ \emph{different} but \emph{fixed} rates according to the system's design specifications. Any rate other than these given rates is not a feasible option for the transmitter. Additionally, the transmission rates are bounded between a minimum and a maximum transmission rate denoted by $R_{\min}$ and $R_{\max}$, respectively, i.e.,  $R_{\min} \leq R_i \leq R_{\max}$ for $i=0, \dots, N_T$.
Without loss of generality, we assume the rates are sorted in a decreasing order. Hence, the transmitter's action set becomes
\begin{equation}\label{Eq:Trans_Action_Set_Arbitrary_Rates}
 \mathcal{R} = \big\{ {\scriptstyle R_0=  R_{\max} > \dots > R_i > \dots > R_{N_T} = R_{\min}} \big\}
\end{equation} 
The transmitter's goal is to maximize the achievable expected transmission rate over the channel. Assuming that transmission at channel capacity is possible, and from the capacity of discrete-time AWGN channel, the transmission power must at least be equal to
\begin{equation}\label{Eq:Trans_Power}
 \begin{aligned}
  & R_{\max}  = R_0 = \frac{1}{2} \log \left(1 + \frac{P_T}{N} \right) \ (\text{nats/transmission})\\
  & \Rightarrow \qquad P_T = N \left( e^{2R_0} -1 \right) \\
 \end{aligned}
\end{equation}
Throughout the rest of this section, we assume transmission at channel capacity and the transmission power is fixed and given by \eqref{Eq:Trans_Power}. 
Given that the channel noise variance is assumed to be fixed and known, corresponding to each transmission rate $R_j \in \mathcal{R} $ there exists a certain jammer power, $\widehat{J_j} \geq 0$, below which reliable transmission is possible, i.e.
\begin{equation}\label{Eq:Jamming_Powers_Corresponding_Rates}
 \begin{aligned}
  & R_j = \frac{1}{2} \log \left( 1 + \frac{P_T}{N+ \widehat{J_j}} \right); \ \ j = 0,\dots, N_T \\
  & \Rightarrow \qquad \widehat{J_j} = N\frac{e^{2R_0} - e^{2R_j}}{e^{2R_j} - 1 };  \ j = 0,\dots, N_T \\
 \end{aligned}
\end{equation}
With this notation, we can define a one to one correspondence between transmitter rates and jammer power levels. Therefore, we can use $\widehat{J}_j$ given in \eqref{Eq:Jamming_Powers_Corresponding_Rates} and/or $R_j$ for $j=0,\dots, N_T$ to refer to transmitter strategies interchangeably. 

Assume the jammer's goal is to force the transmitter to operate at his lowest rate, $R_{N_T}$, while keeping the lowest possible average and maximum jamming power. As a result of Lemma \ref{Lem:N_T}, the jammer does not need to use more strategies than the transmitter i.e., he only needs $(N_T+1)$ jamming power levels. Consider the following action set for the jammer
\begin{equation}
\label{Eq:Jamming_Powers_Corresponding_Rates_Plus_Delta}
 \mathcal{J} = \Big\{ J_0, J_1, \dots, J_j, \dots, J_{N_T} \Big\}
\end{equation}
\begin{equation*}
  J_j = 
  \begin{cases}
    0 & j = 0\\
    \widehat{J}_{j-1} + \delta N= N\left( \delta + \frac{e^{2R_0} - e^{2R_{j-1}}}{e^{2R_{j-1}} - 1 } \right)  & j = 1,\dots, N_T\\
  \end{cases}
\end{equation*}
The term $ \delta N $ with $\delta > 0 $ is an extra added jamming power to the non-zero jamming powers to make sure that $R_{j-1}$ is greater than channel capacity for the jamming power $J_j$. Since the transmitter's goal is to achieve the maximum possible expected transmission rate and the jammer's goal is to minimize the same expected value, we can define the utility function based on the capacity of the discrete-time AWGN channel, i.e.
\begin{equation}\label{Eq:Utility_Function_Rate}
 C \big( R_j , J_j \big) = 
  \begin{cases}
   R_j & \widehat{J}_j \geq J_j\\
   0   & \widehat{J}_j < J_j\\
  \end{cases}
  \qquad R_j \in \mathcal{R} \ \text{and} \ J_j \in \mathcal{J}
\end{equation}
The utility function defined in \eqref{Eq:Utility_Function_Rate} has the same format of the general utility function defined in \eqref{Eq:Utility_Function} and it can be easily verified that the $R_j$ as defined in \eqref{Eq:Jamming_Powers_Corresponding_Rates} is a strictly decreasing function of $j$, or equivalently $\widehat{J}_j$. Hence, we can directly apply Theorem \ref{Theorem:Powerful_Jammer} to find the minimum average jamming power or the jamming power threshold, $J_{\text{TH}}$.
Substituting \eqref{Eq:Jamming_Powers_Corresponding_Rates} and \eqref{Eq:Jamming_Powers_Corresponding_Rates_Plus_Delta} in \eqref{Eq:Jamming_Threshold} and simplifying the result, the jamming threshold becomes
\begin{equation}\label{Eq:Jamming_Threshold_For_Fixed_Rates}
 \begin{aligned}
  J_{\text{TH}} & = R_{N_T} \sum^{N_T}_{j=1} \left(\frac{1}{R_j} - \frac{1}{R_{j-1}} \right)J_j
	  = N\delta \left( 1 - \frac{R_{N_T}}{R_0} \right)  \\
	 & \quad + N R_{N_T} \sum^{N_T}_{j=1} \left[ \left(\frac{1}{R_j} - \frac{1}{R_{j-1}} \right) \frac{e^{2R_0} - e^{2R_{j-1}}}{e^{2R_{j-1}} - 1 } \right]\\
 \end{aligned}
\end{equation}
and the jammer's optimal mixed-strategy with the minimum average jamming power that achieves the NE is given by
\begin{equation}
\renewcommand{\arraystretch}{.2}
\setlength{\arraycolsep}{2pt}
  \boldsymbol{y}^{*^T} = R_{N_T}
 \begin{bmatrix}
  {\scriptstyle R_{0}^{-1}} &
  {\scriptstyle \cdots} &
  {\scriptstyle R_j^{-1} - R_{j-1}^{-1}} &
  {\scriptstyle \cdots} &
  {\scriptstyle R_{N_T}^{-1} - R_{N_T-1}^{-1}} &
 \end{bmatrix}_{1\times (N_T+1)}
\end{equation}
We define the jammer's \emph{randomization gain} to be the power advantage that he gains for switching from pure-strategies (i.e., reactive non-strategic jammer) to optimal mixed-strategies. With this definition, the randomization gain becomes the ratio of the jammer's pure strategy power ($J_{\max}$ in this case) over the average power of the optimal mixed-strategy that forces the same expected payoff at the NE, i.e.
\begin{equation}
 \text{Randomization Gain} = \frac{J_{\max}}{J_{\text{TH}}} = \frac{J_{N_T}}{J_{\text{TH}}} > 1
\end{equation}

To provide a numerical example, assume, a single-hop jamming resilient communication system that uses the typical rates of the IEEE 802.11x standard i.e., the available coded data rates of the communication system are
\begin{equation}
 \mathcal{R}_{a} = \Big\{ 54,\ 51,\ 48,\ \cdots ,\ 12,\ 9,\ 6 \Big\} \qquad (\text{Mb/s})
\end{equation}
Figure \ref{Fig:Fixed_Rates_AWGN_Transmitter}(top) shows the randomization gain of the optimal jammer as a function of the expected transmission rate at the NE for this example. The figure is sketched for continuous AWGN channel and typical values of $N$ and $\delta$. For this typical example, the randomization gain of the jammer is
%
\begin{equation}
 3\ \text{dB} \approx 1.8 \leq \text{Randomization Gain}  \leq 16 \approx 12.5 \ \text{dB}
\end{equation}
Figure \ref{Fig:Fixed_Rates_AWGN_Transmitter} (top) also provides a comparison between the reactive non-strategic jammer and the optimal strategic jammer. As expected, the optimal strategic jammer requires less average power than the reactive non-strategic jammer to force the same expected rate at the NE. Figure \ref{Fig:Fixed_Rates_AWGN_Transmitter} (bottom) shows a typical optimal transmission strategy for the transmitter in this case.
%
%
%
%
\begin{figure}
 \centering
 \includegraphics[width = 3in]{./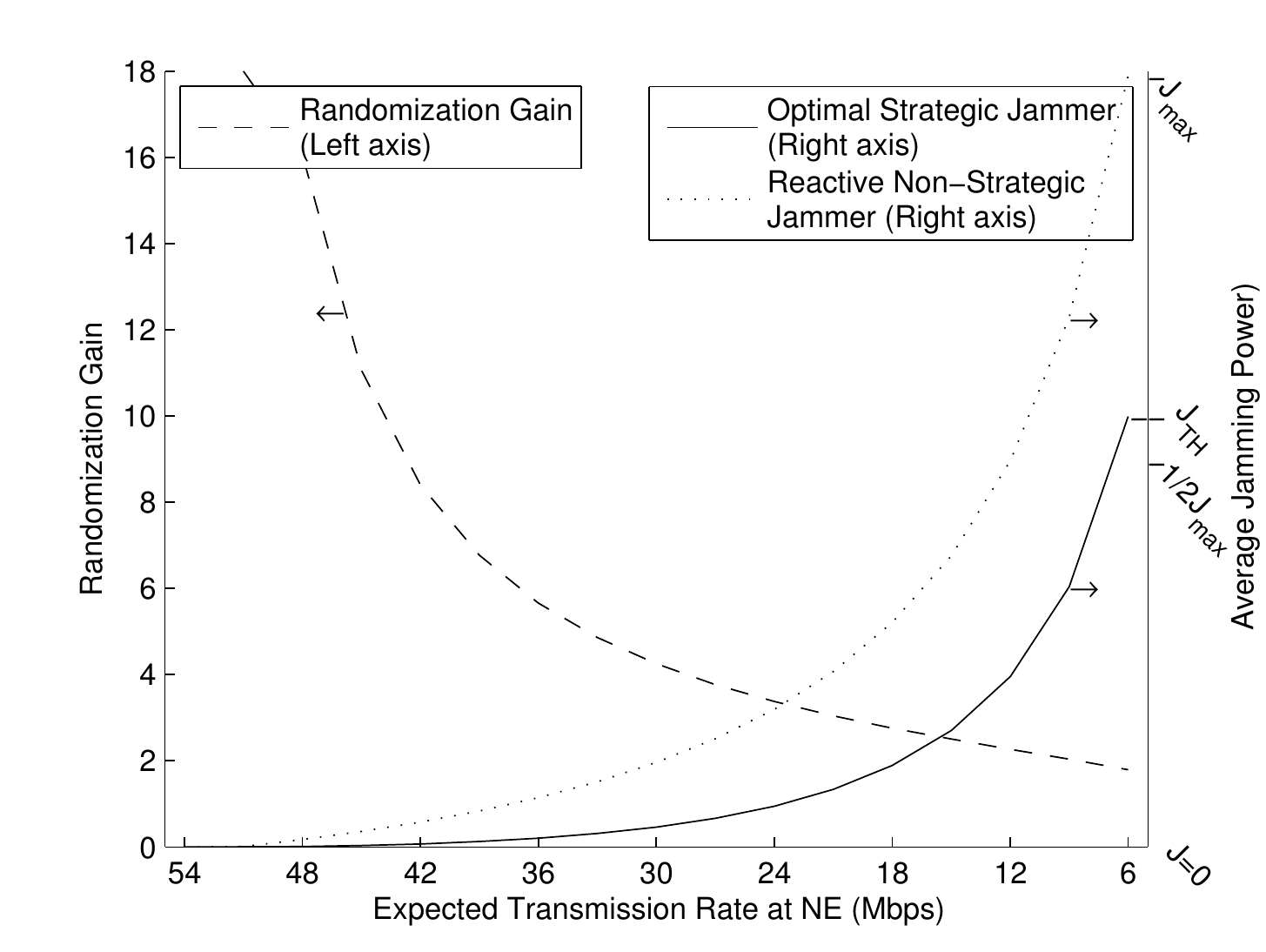}
 \includegraphics[width = 2.0in]{./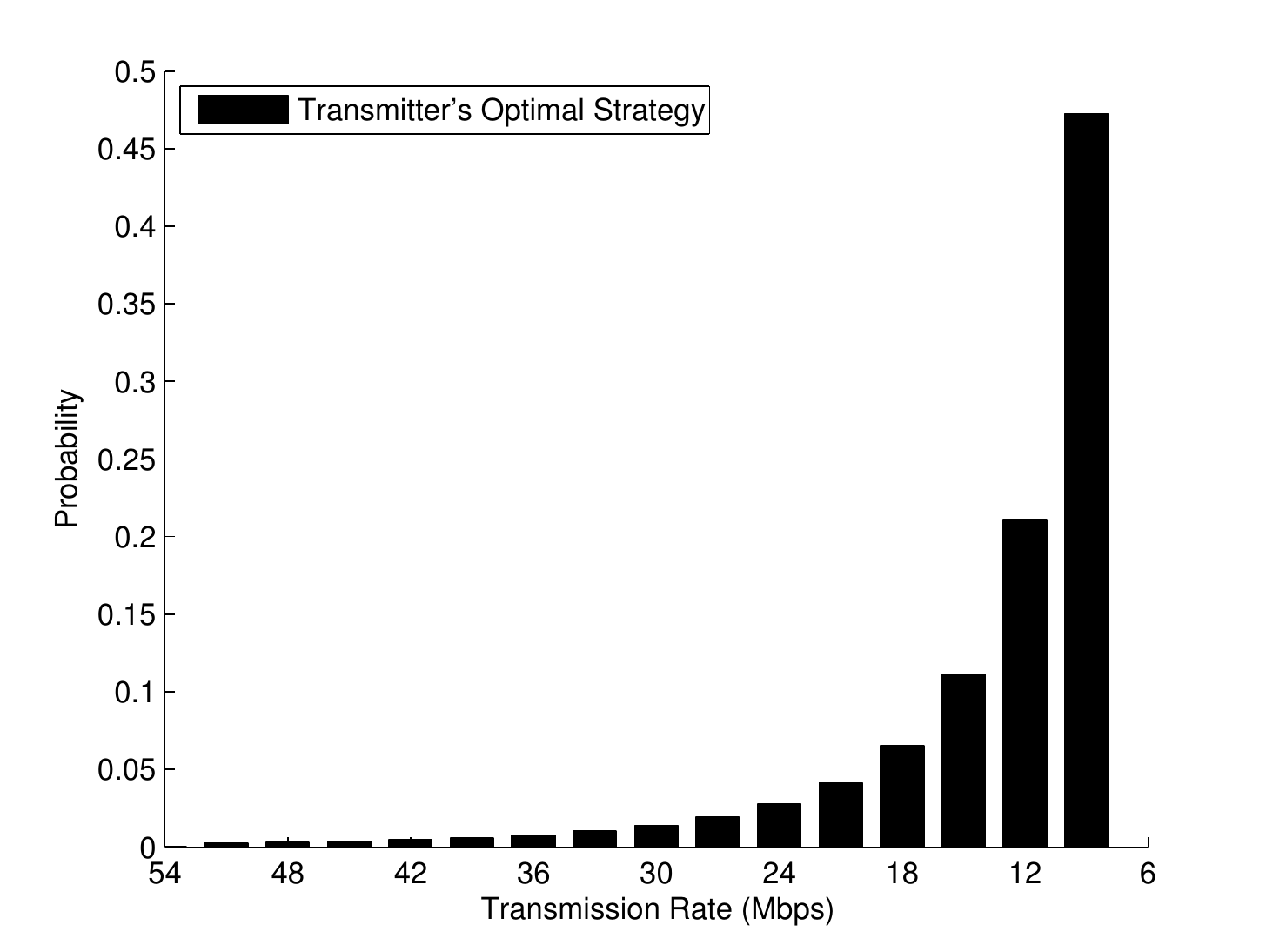}
 \caption{Jammer's randomization gain and average power as a function of expected rate at the NE (top), transmitter's typical optimal mixed-strategy for $J_{\text{ave}} < J_{\text{TH}}$ (bottom).}
 \label{Fig:Fixed_Rates_AWGN_Transmitter}
\end{figure}
%
%
\subsection{Special Case II: Equally Spaced Jamming Powers}
\label{SubSec:Special_Case_II}
Consider a communication link with the same setup defined in section \ref{SubSec:Special_Case_I}. 
Assume the jammer is using $(N_T+1)$ discrete jamming power levels equally spaced in the interval $\left[0,J_{\max}\right]$, i.e., the jammer's action set is
\begin{equation}\label{Eq:Equally_Spaced_Jamming_Powers}
 \mathcal{J} = \Big\{ J_j = \frac{j}{N_T} J_{\max}; \quad 0\leq j \leq N_T \Big\}
\end{equation}

The transmitter has a rate adaptation block which allows him to transmit at any arbitrary rate. Assume the transmitter's goal is to maximize the achievable expected transmission rate over the discrete-time AWGN channel. As a result of lemma \ref{Lem:N_T}, the optimal strategy for the transmitter is to use, at most, $(N_T+1)$ rates where each rate corresponds to one of the jammer pure strategies. 
Assuming that transmission at the AWGN channel capacity is possible, we can define the achievable transmission rate based on the discrete-time AWGN channel capacity when the signal to noise ratio $\frac{P_T}{N}$ is replaced by the signal to noise plus jamming power ratio $\frac{P_T}{N+J_j}$. In this case, the transmitter's action set becomes
\begin{equation}
\label{Eq:Rates_forEqually_Spaced_Jamming_Powers}
 \mathcal{R} = \Big\{ R_i = 
  {\textstyle 
  \frac{1}{2} \log \left( 1 + \frac{P_T}{N + \frac{i}{N_T}J_{\max}} \right) \ 0\leq i\leq N_T }
  \Big\}
\end{equation}
In this special case, the jamming power set representing the transmitter's action set, $\mathcal{J}_T$, is identical to the jammer's action set, $\mathcal{J}_T = \mathcal{J}$. 
Since the transmitter's goal is to achieve the maximum expected transmission rate over the channel, we can define the utility function of the game based on the AWGN channel capacity, i.e.,
\begin{equation}\label{Eq:R_of_J_T}
 Z\big( J_T \big) = R \big( J_T \big) = \frac{1}{2} \log \left( 1 + \frac{P_T}{N + J_T} \right) \quad J_T \in \mathcal{J}
\end{equation}

Given that at rates higher than capacity reliable communication is impossible and since
$R \big( J_T \big)$ defined in \eqref{Eq:R_of_J_T} is a strictly decreasing function, the framework defined in section \ref{Sec:Game_Model} can be applied to this special case. Thus, the results derived in section \ref{Sec:Game_Analysis} can be used to determine the optimal strategies and the expected value of the game at NE.

Assuming the jammer's average power, $J_{\text{ave}}$, satisfies $J_{\text{ave},m}$ $\leq J_{\text{ave}} <$ $J_{\text{ave},m+1}$ for some $m = 0,\cdots, N_T-1$, the optimal mixed-strategies for the transmitter and the jammer simplify to
\begin{equation*}
\setlength{\arraycolsep}{2pt}
  \boldsymbol{x}^{*^T} = \left( \sum^{m}_{i=0} R_i^{-1} \right)^{-1}\begin{bmatrix}
                                                                {\scriptstyle R_0^{-1}}   &
								{\scriptstyle \cdots}   &
								{\scriptstyle R_i^{-1}} &
								{\scriptstyle \cdots}   & 
								{\scriptstyle R_m^{-1}} &
								{\textstyle \boldsymbol{0}_{1\times (N_T-m)}} &                                                               \end{bmatrix}
\end{equation*}
\begin{equation*}
 \boldsymbol{y}^*   =  
 {\scriptstyle \left[ (m+1) - N_T \frac{J_{\text{ave}}}{J_{\max}} \right]\left( \sum^{m}_{i=0} R_i^{-1} \right)^{-1}} 
				      \left[
				      \begin{smallmatrix}
				      {\scriptstyle R_0^{-1}} 			\\
				      {\scriptstyle :} 			\\
				      {\scriptstyle R_j^{-1}- R_{j-1}^{-1}} 	\\
				      {\scriptstyle :} 			\\
				      {\scriptstyle R_m^{-1} - R_{m-1}^{-1}} 	\\
				      {\scriptstyle \frac{J_{m+1} - J_{\text{ave}}}{J_{m+1} - J_{\text{ave},m}}R_m^{-1}} \\
				      \boldsymbol{0}_{(N_T-m-1) \times1} \\                                                               \end{smallmatrix}
				      \right]
\end{equation*}
where $J_{\text{ave},m}$ is given by
\begin{equation}
 J_{\text{ave},m} = \frac{1}{N_T}J_{\max} \left[ (m+1) - R_m \sum^{m}_{i=0} R_i^{-1} \right]
\end{equation}
The expected value of the game at the NE, as function of the jammer's average power, is given by
\begin{equation}
 R\big( J_{\text{ave}} \big) = \left[ (m+1) - N_T \frac{J_{\text{ave}}}{J_{\max}} \right] \left( \sum^{m}_{i=0} R_i^{-1} \right)^{-1}
\end{equation}

In this special case it can be shown that an upper bound for jamming power threshold is given by
\begin{equation}\label{Eq:J_TH_Upper_Bound}
{\textstyle
 J_{\text{TH,U}} =   \frac{1}{2}\frac{N_T+1}{N_T} \left( 1 - \frac{R_{N_T}}{R_0} \right) J_{\max} < \frac{1}{2} J_{\max}\ \ \ N_T\gg 1
 }
\end{equation}
and a simple strategy and an approximation to the jammer's optimal strategy that achieves this bound is given by
\begin{equation}
{\textstyle
 \dot{\boldsymbol{y}}^{T} =  \begin{bmatrix}
				      \dot{y}_0 & \dot{y} & \cdots & \dot{y} \\
				    \end{bmatrix}
  ; \
   \dot{y}_0 = \frac{R_{N_T}}{R_0}  \ \text{and} \
   \dot{y} = 1- \frac{1}{N_T} \dot{y}_0 
   }
\end{equation}
\begin{proof}
 Assume the jammer is using a mixed-strategy, $\dot{\boldsymbol{y}}$, according to%
 \footnote{We will use the term \emph{semi-uniform} to refer to this class of pmf.}
\begin{equation}
\label{Eq:Semi_Uniform}
\dot{\boldsymbol{y}} = \begin{bmatrix} y_0 \\ y \\ \vdots \\ y \end{bmatrix}_{(N_T + 1) \times 1} \text{where} \ \
  \begin{aligned}
  & y_0 = 1 - \frac{2N_T}{N_T + 1} \cdot \frac{J_{\text{ave}}}{J_{\max}}\\
  & y   = \frac{2}{N_T + 1} \cdot \frac{J_{\text{ave}}}{J_{\max}}
 \end{aligned}
\end{equation}
It can easily be verified that 
$ \dot{\boldsymbol{y}} \in \mathbf{Y}_{\text{LE}|J_{\text{ave}}} $. Furthermore, assume the transmitter is using an arbitrary mixed-strategy in which the probability associated with the payoff $R_i$, $0\leq i \leq N_T$ is denoted by $x_i$.
Define $R(\boldsymbol{x}, \dot{\boldsymbol{y}})$ to be the expected payoff of the game for the transmitter's arbitrary mixed-strategy, $\boldsymbol{x}$, against jammer's mixed-strategy defined in \eqref{Eq:Semi_Uniform}
\begin{equation}
{\textstyle
 R(\boldsymbol{x}, \dot{\boldsymbol{y}}) = R_{-i,N_T} + R_{N_T}x_{N_T} + R_ix_i \Pr \big[ J\leq J_T = J_i \big]
 }
\end{equation}
where $R_{-i,N_T}$ denotes the partial expected payoff resulting from all pure strategies except for the $i$'th and $N_T$'th strategies. In order to improve his payoff, the transmitter deviates from his current strategy, $\boldsymbol{x}$, to a new strategy, $ \boldsymbol{x}' $, where $x_{N_T}' = x_{N_T} + \delta $ and $ x_i' = x_i - \delta $ and $ 0 < \delta $. Define $R(\boldsymbol{x}', \dot{\boldsymbol{y}})$ to be the expected payoff for the new strategy. 
\begin{equation}
 \begin{aligned}
    R(\boldsymbol{x}', \dot{\boldsymbol{y}}) = & R_{-i,N_T} + R_{N_T} \left( x_{N_T} + \delta \right) \\
					       & + R_i \left( x_i -\delta \right) \Pr \Big[ J\leq J_T = J_i \Big]\\
					     = &  R(\boldsymbol{x}, \dot{\boldsymbol{y}}) + \delta \big[ R_{N_T} - R_i \left( y_0 + iy \right) \big]
 \end{aligned}
\end{equation}
Let $\Delta R $  be the difference in the expected payoff of the game caused by deviating to the new strategy, i.e.,
\begin{equation}\label{Eq:Delta_Z}
 \begin{aligned}
  \Delta R & = R(\boldsymbol{x}', \dot{\boldsymbol{y}}) - R(\boldsymbol{x}, \dot{\boldsymbol{y}}) \\
	   & = \delta \left[ R_{N_T} - R_i \left( 1-2\frac{N_T-i}{N_T+1}\cdot \frac{J_{\text{ave}}}{J_{\max}} \right) \right]\\
 \end{aligned}
\end{equation}
where $ \delta > 0 $ and $ 0\leq i < N_T $. Assume (for now) that $\Delta R \geq 0 $ then we can rewrite \eqref{Eq:Delta_Z} as
\begin{equation}\label{Eq:J_Ave_Inequality}
{\textstyle
 J_{\text{ave}} \geq J_{\max} \left( 1 - \frac{R_{N_T}}{R_i} \right) \left( \frac{1}{2} \frac{N_T + 1}{N_T - i} \right) \stackrel{\Delta}{=} U_i 
 }
\end{equation}
Define $J_{\text{TH,U}}$ as 
\begin{equation}\label{Eq:J_TH_Max}
 J_{\text{TH,U}} = \max_{0\leq i < N_T} J_{\max} 
{\textstyle
 \left( 1 - \frac{R_{N_T}}{R_i} \right) \left( \frac{1}{2} \frac{N_T + 1}{N_T - i} \right)
 }
\end{equation}
then for $J_{\text{ave}} > J_{\text{TH}}$ and for all $\delta>0 $ and $0\leq i < N_T$ the inequality in \eqref{Eq:J_Ave_Inequality} is satisfied and hence $\Delta R > 0$. As a result, the transmitter can improve his payoff by dropping the probability of his $i$'th strategy and adding it to his $N_T$'s strategy (the strategy with the lowest payoff). Since the $i$'th strategy was chosen arbitrary, the transmitter can improve his expected payoff by dropping probability from all strategies, except the $N_T$'th strategy, and adding them to the $N_T$'th strategy. This process can be continued until all probabilities are accumulated in  $x_{N_T}$ and no further improvement to the expected payoff is possible.

In general, it can be shown that \eqref{Eq:J_TH_Max} is maximized for $i=0$ (See Appendix \ref{Appendix_1}) which results in the desired upper bound and the mixed-strategy given in \eqref{Eq:J_TH_Upper_Bound} and \eqref{Eq:Semi_Uniform}, respectively. The penalty in using the semi-uniform strategy instead of the optimal mixed-strategy is that the jammer requires greater jamming power to force the same expected NE.
\end{proof}
\begin{figure}
 \centering
 \includegraphics[width = 3in]{./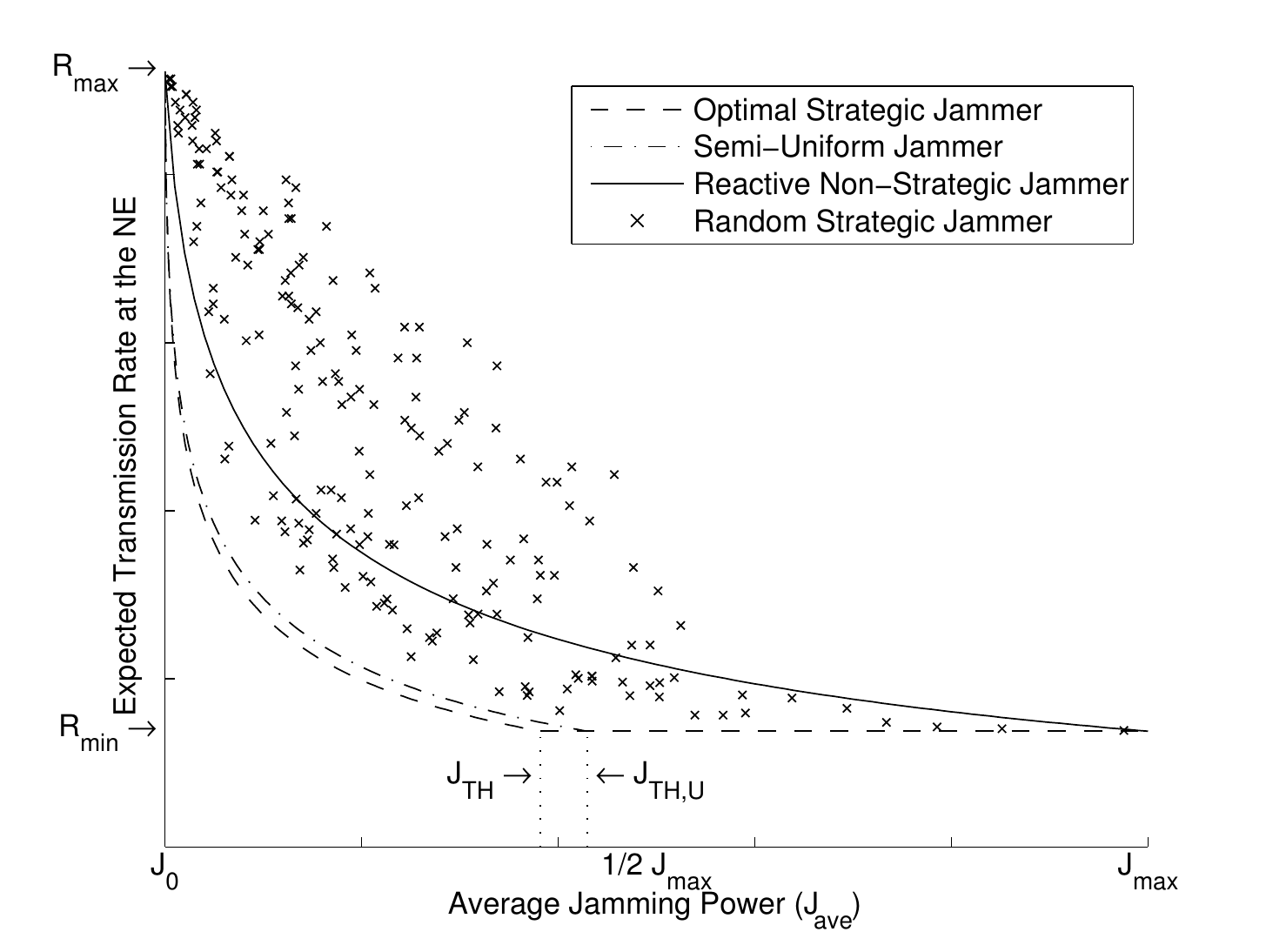}
 \includegraphics[width = 2.0in]{./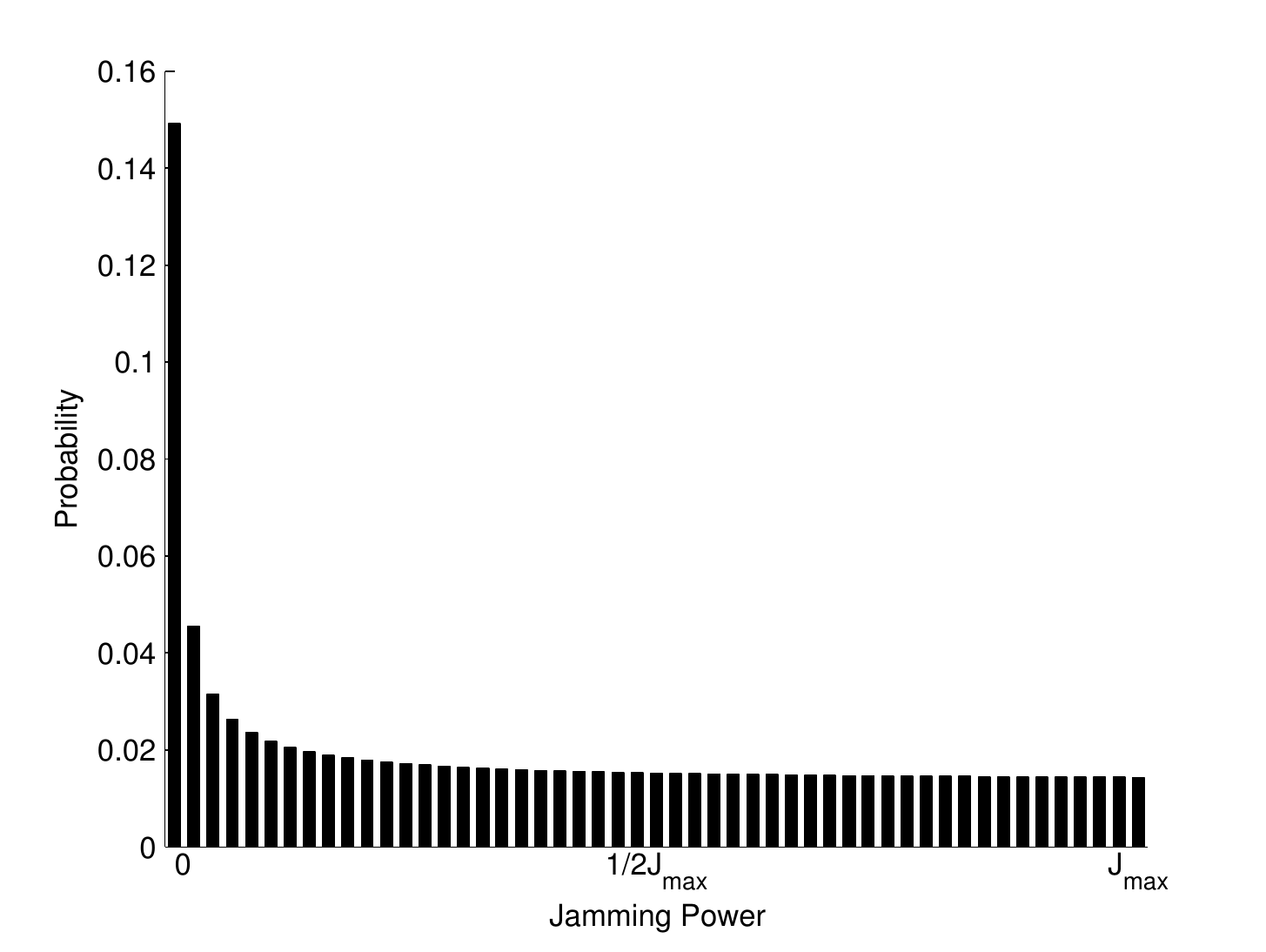}
 \vspace{-.5cm}
 \caption{$R\big( J_{\text{ave}} \big) $ as a function of $J_{\text{ave}}$ (top), optimal mixed-strategy with the lowest average power that forces $R_{\min}$ (bottom).}
 \label{Fig:Equally_Spaced_Jamming_Powers}
\end{figure}

Figure \ref{Fig:Equally_Spaced_Jamming_Powers} (top) shows the expected transmission rate at the NE as a function of the jammer's average power for typical values of $P_T, N, J_{\max}$. 
The figure is sketched for four different jammers; optimal strategic jammer, semi-uniform jammer, reactive non-strategic jammer and random strategic jammer (a jammer that uses a random mixed-strategy). As it can be verified by Figure \ref{Fig:Equally_Spaced_Jamming_Powers} (and numerical simulations) that all non-optimal jamming strategies under-perform the optimal Strategic jammer.
Figure \ref{Fig:Equally_Spaced_Jamming_Powers} (bottom), shows a typical optimal mixed-strategy with the lowest average power that can force $R_{\min}$ at the NE. The randomization gain of this strategy is
%
\begin{equation}
 \text{Randomization Gain}  \approx 2.9 = 4.6 \ \text{dB}
\end{equation}
%
%
%
%
%
\section{Conclusion}
\label{Sec:Conclusion}
We formulated the interaction between an adaptive transmitter (a transmitter with multiple transmission choices) and a smart power limited jammer in a game theoretic context. We showed that packetization and adaptivity benefits a smart jammer. While the standard information-theoretic performance results for a jammed channel corresponds to pure Nash equilibrium, packetized adaptive communication leads to a lower expected game value and a mixed-strategy Nash Equilibrium. Inspired by the Shannon's capacity theorem, we defined a general utility function and a payoff matrix which may be applied to a variety of jamming problems. Furthermore, we showed the existence of optimal mixed-strategy NE for the transmitter and the jammer. We showed the existence of a threshold on jammer's average power such that if the jammer's average power exceeds this threshold then the expected value of the game at NE corresponds to the transmitter's lowest payoff; as if the jammer was using the maximum jamming power all the time. Finally, we studied an special case of optimal strategies in a discrete-time AWGN wireless channel under jamming and showed that randomization can significantly assist a smart jammer with limited average power.
%
%
%
\appendices
\section{Jamming Threshold Upper Bound}
\label{Appendix_1}

In section \eqref{SubSec:Special_Case_II} we showed, without giving a proof, that 
\begin{equation}\label{Eq:U_I}
 \max_{0\leq i < N_T} 
 \textstyle
 \left\{ U_i \stackrel{\Delta}{=} \left( 1 - \frac{R_{N_T}}{R_i} \right) \left( \frac{1}{2} \frac{N_T + 1}{N_T - i} \right)J_{\max} \right\}=J_{\text{TH,U}}
\end{equation}
where $R_i$ and $J_{\text{TH,U}}$ are given by \eqref{Eq:Rates_forEqually_Spaced_Jamming_Powers} and \eqref{Eq:J_TH_Upper_Bound}, respectively.
\begin{proof}
To show that \eqref{Eq:U_I} is true for all $ P_T,$ $N,$  $J_{\max},$ $J_{\text{ave}} > 0$ we need to show that $U_i$ is indeed maximized for $i=0$. First, we rewrite $U_i$ as
\begin{equation}\label{Eq:Zi2}
\textstyle
  \begin{aligned}
    U_i &=  \frac{1}{2} J_{\max} \frac{N_T + 1}{N_T} \frac{1}{1-\frac{i}{N_T}}  \left(1 - \frac{R_{N_T}}{R_i} \right)  			;\quad 0 \leq i < N_T\\
	&=  \left( \frac{1}{2} J_{\max} \frac{N_T + 1}{N_T} \right) \frac{J_{\max}}{J_{\max} - (\frac{i}{N_T} J_{\max})} \left(1 - \frac{R_{N_T}}{R_i} \right)\\
  \end{aligned}
\end{equation}
define $J$ and $R(J)$ as
\begin{equation}\label{Eq:J_and_RJ}
\textstyle
  \begin{aligned}
    & J = \left( \frac{i}{N_T} J_{\max} \right)  & 0 & \leq i < N_T  \\
    & R(J) = \frac{1}{2} \log \left( 1 + \frac{P_T}{N + J} \right) & 0 & \leq J < J_{\max} \\
  \end{aligned}
\end{equation}
substituting \eqref{Eq:J_and_RJ} in \eqref{Eq:Zi2} and we have
\begin{equation}\label{Eq:ZJ}
  \begin{aligned}
    U(J) &= \left( \frac{1}{2} J_{\max} \frac{N_T + 1}{N_T} \right) \frac{J_{\max}}{J_{\max} - J} \left[1 - 			 	\frac{R(J_{\max})}{R(J)} \right]\\
	 &= a \times \frac{J_{\max}}{J_{\max} - J} \left[1 - \frac{R(J_{\max})}{R(J)} \right]\\
	 &= a\times F(J); \quad \text{where} \quad a > 0\ \text{and}\ 0\leq J < J_{\max}\\ 
  \end{aligned}
\end{equation}
If $F(J)$ in \eqref{Eq:ZJ} were a decreasing function of $J$ then $U_i$ and $U(J)$ would also be decreasing functions of $i$ and $J$ respectively. Let
\begin{equation}\label{Eq:FJ}
  F(J) = f(J)g(J) \quad \text{where} \quad 
  \begin{aligned}
    &f(J) = \frac{J_{\max}}{J_{\max} - J}\\
    &g(J) = 1 - \frac{R(J_{\max})}{R(J)}\\
  \end{aligned}
\end{equation}
For decreasing $F(J)$ we have
\begin{equation}\label{Eq:Decreasing_FJ}
 \begin{aligned}
    \frac{\partial}{\partial J}F & = g \frac{\partial}{\partial J}f + f \frac{\partial}{\partial J}g < 0\\
				 & f,g>0 \quad \text{for} \quad 0\leq J < J_{\max}  \Rightarrow \frac{\frac{\partial}{\partial J}f}{f} < - \frac{\frac{\partial}{\partial J}g}{g}\\
 \end{aligned}
\end{equation}
From \eqref{Eq:FJ} we have 
\begin{equation}\label{Eq:dg_and_df}
  \begin{aligned}
 - \frac{\frac{\partial}{\partial J}g}{g} =& 
    \frac{1}{N+J} \times \left( \frac{x}{1 + x} \right) \left( \frac{\log (1 + x_m)}{\log (1 +x)} \right) \\
      & \quad \times \left( \frac{1}{\log (1 + x) - \log (1 + x_m) } \right) \\
  \frac{\frac{\partial}{\partial J}f}{f} =& \frac{P_T^{-1} x_m x}{x - x_m}\\
  \end{aligned}
\end{equation}
where
\begin{equation}\label{Eq:x}
 x = \frac{P_T}{N + J} \ \ \text{and} \ \ x_m = \frac{P_T}{N + J_{\max}}\ \  \text{and} \
 0 < x_m < x
\end{equation}
If we plug \eqref{Eq:dg_and_df} and \eqref{Eq:x} in \eqref{Eq:Decreasing_FJ} and simplify the resulted inequality we have
\begin{equation}
  \begin{aligned}\label{Eq:MathcalZ}
    \mathcal{Z} =& \frac{x_m^{-1}x}{1+x} \frac{x-x_m}{\log(1+x) - \log(1+x_m)} \frac{\log(1+x_m)}{\log(1+x)} \\
		 & > 1 \qquad \text{for} \quad 0 < x_m < x \\
  \end{aligned}
\end{equation}
We need to show that \eqref{Eq:MathcalZ} holds for all $ 0 < x_m < x $. Notice that 
\begin{equation}\label{Eq:MathcalZ_Limit}
  \lim _{x \rightarrow x_m^{+}} \mathcal{Z} \sim \frac{x_m^{-1} \log(1+x_m)}{x^{-1} \log(1+x)} \rightarrow 1^+ \quad \forall \ 0 < x_m < x
\end{equation}
since we have used
\begin{equation}
  \frac{d}{d z} z^{-1} \log(1+z) < 0 \quad \forall \ 0 < z
\end{equation}
and the following natural logarithm property
\begin{equation}\label{Eq:Log_Inequality}
  \frac{z}{1+z} < \log(1+z) \leq z \quad \text{for all} \quad z>0 
\end{equation}
For simplicity we rewrite inequality in \eqref{Eq:MathcalZ} as
\begin{equation}\label{Eq:MathcalZ2}
  \begin{aligned}
 & \mathcal{Z}_2 = \left[ x(x-x_m) \log(1+x_m) \right] \\ 
	       & - \left[ x_m (1+x) \log(1+x) \left( \log(1+x) - \log(1+x_m) \right) \right] 
		 > 0
  \end{aligned}
\end{equation}
As a result of \eqref{Eq:MathcalZ_Limit} we have $ \lim_{x \rightarrow x_m^{+}} \mathcal{Z}_2 \rightarrow 0^+ $ for all $0<x_m<x$. Since \eqref{Eq:MathcalZ2} holds for $x\rightarrow x_m^+ $, if $\mathcal{Z}_2$ were a strictly increasing function of $x$ for all $x>x_m$, \eqref{Eq:MathcalZ2} and \eqref{Eq:MathcalZ} would also hold as a corollary.

To show that $\mathcal{Z}_2$ is strictly increasing, we first verify that 
\begin{equation}\label{Eq:MathcalZ2_Derivative}
 \frac{\partial \mathcal{Z}_2}{\partial x} \bigg( x = x_m \bigg) = 0
\end{equation}
given that \eqref{Eq:MathcalZ2_Derivative} is true, an alternative way to proceed is to show that $\frac{\partial \mathcal{Z}_2}{\partial x}$ is itself strictly increasing function of $x$ (strictly convex function of $x$). Define $\mathcal{Z}_3$
\begin{equation}\label{Eq:MathcalZ3}
  \begin{aligned}
      & \mathcal{Z}_3 = \frac{\partial^2 \mathcal{Z}_2}{\partial x^2} \times (1+x) = 2\log(1+x_m) - 2x_m + \\ 
      & 2x \log(1+x_m)- 2x_m \log(1+x) + x_m \log(1+x_m)\\
  \end{aligned}
\end{equation}
It can be verified that for all $x>x_m$ and $x_m > 0 $ we have  $ \lim_{x \rightarrow x_m^{+}} \mathcal{Z}_3 > 0 $. Taking the partial derivate of $\mathcal{Z}_3$ with respect to $x$ and we have
\begin{equation}
      \frac{\partial \mathcal{Z}_3}{\partial x} = 2 \left[ \log(1+x_m) - \frac{x_m}{1+x} \right]
\end{equation}
but from \eqref{Eq:Log_Inequality} we have
\begin{equation}
  \begin{aligned}
      &\log(1+x_m) > \frac{x_m}{1+x_m}>\frac{x_m}{1+x} \quad \text{for all} \quad x>x_m\\
      & \Rightarrow \quad 2 \left[ \log(1+x_m) - \frac{x_m}{1+x} \right] >0 \quad \forall x>x_m>0\\
  \end{aligned}
\end{equation}
and hence we have 
\begin{equation}
 \frac{\partial \mathcal{Z}_3}{\partial x} > 0 \qquad \text{for all} \quad x>x_m>0
\end{equation}
Consequently, $\mathcal{Z}_2$ is indeed an increasing function of $x$ for all $0<x_m<x$. Taking the reverse steps that resulted in \eqref{Eq:MathcalZ2} and \eqref{Eq:MathcalZ} we can conclude that $U_i$ in \eqref{Eq:U_I} is indeed a strictly decreasing function of $i$ and hence it is maximized for $i=0$.
\end{proof}
%
%
%
\section{Linear Programming and Constrained Two-Player Zero-Sum Games}
\label{Appendix_2}
Consider a two-player zero-sum game in which, due to some practical reason, not all mixed-strategies are feasible strategies \cite{Owen}. Assume the mixed-strategies $\boldsymbol{x} \in \mathbf{X}$  and $\boldsymbol{y} \in \mathbf{Y}$ (player I's and player II's mixed-strategies respectively) must be chosen from some hyper-polyhedron, i.e., from constraint sets defined by linear inequalities and equalities.
If we let $Z$ be the game matrix, player I's problem is to find
\begin{equation}\label{Eq:Player_I_Problem}
 \max_{\boldsymbol{x} \in \mathbf{X}} \left( \min_{\boldsymbol{y} \in \mathbf{Y}} \ \boldsymbol{x} Z \boldsymbol{y}^T \right)
\end{equation}
where $\boldsymbol{x}$ and $\boldsymbol{y}$ are row vectors and the sets $\mathbf{X}$ and $\mathbf{Y}$ in the most general case, are defined by 
\begin{equation}
\label{Eq:Constrained_Sets_General_Case}
\textstyle
  \mathbf{X}: \ \begin{cases}
                \boldsymbol{x} B \leq \boldsymbol{c}\\
		\boldsymbol{x} \geq \boldsymbol{0}
               \end{cases}
  \qquad
  \mathbf{Y}: \ \begin{cases}
                \boldsymbol{y} E^T \geq \boldsymbol{f}\\
		\boldsymbol{y} \geq \boldsymbol{0}
               \end{cases}
\end{equation}
Similarly, player II's problem is to find
\begin{equation}\label{Eq:Player_II_Problem}
 \min_{\boldsymbol{y} \in \mathbf{Y}} \left( \max_{\boldsymbol{x} \in \mathbf{X}} \ \boldsymbol{x} Z \boldsymbol{y}^T \right)
\end{equation}
Consider the optimization problem in \eqref{Eq:Player_I_Problem}, the minimization problem inside the parenthesis can be represented by a linear program whose objective function depends on $\boldsymbol{x}$. From the duality theorem \cite{Owen} and if the program is feasible and bounded, then the two programs
\begin{equation}
\textstyle
 \text{Minimize} \ \left( \boldsymbol{x} Z \right) \boldsymbol{y}^T \quad \text{subject to:} \quad
	       \begin{cases}
                \boldsymbol{y} E^T \geq \boldsymbol{f}\\
		\boldsymbol{y} \geq \boldsymbol{0}
               \end{cases}
\end{equation}
and 
\begin{equation}
\textstyle
\label{Eq:Dual_Program}
 \text{Maximize} \  \boldsymbol{z} \boldsymbol{f}^T  \quad \ \ \ \ \text{subject to:} \quad 
	       \begin{cases}
                \boldsymbol{z} E \leq \boldsymbol{x}Z\\
		\boldsymbol{z} \geq \boldsymbol{0}
               \end{cases}
\end{equation}
will have the same value (where $\boldsymbol{z}$ is an auxiliary variable). If we plug the dual program \eqref{Eq:Dual_Program} in \eqref{Eq:Player_I_Problem}, player I's problem becomes a pure maximization problem, i.e
\begin{equation}
\label{Eq:Player_I_Maximization_Problem}
\textstyle
 \text{Maximize} \ \boldsymbol{z} \boldsymbol{f}^T \quad  \ \text{subject to:} \quad
		\begin{cases}
		 \boldsymbol{z} E - \boldsymbol{x} Z \leq \boldsymbol{0}\\
		 \boldsymbol{x} B \leq \boldsymbol{c} \\
		 \boldsymbol{x}, \boldsymbol{z} \geq \boldsymbol{0}\\
		\end{cases}
\end{equation}
This problem can be solved by the usual linear program algorithms i.e, simplex algorithm. In a similar way, it can be shown that player II's problem could be reduced to a pure minimization problem, i.e.
\begin{equation}
\textstyle
\label{Eq:Player_II_Minimization_Problem}
 \text{Minimize} \ \boldsymbol{c} \boldsymbol{s}^T \quad  \ \text{subject to:} \quad
		\begin{cases}
		 \boldsymbol{s} B^T - \boldsymbol{y} Z^T \geq \boldsymbol{0}\\
		 \boldsymbol{y} E^T \geq \boldsymbol{f} \\
		 \boldsymbol{y}, \boldsymbol{s} \geq \boldsymbol{0}\\
		\end{cases}
\end{equation}
where $\boldsymbol{s}$ is an auxiliary variable. It can be verified that the program \eqref{Eq:Player_II_Minimization_Problem} is the dual of the program \eqref{Eq:Player_I_Maximization_Problem} and therefore, if both are feasible and bounded, they will have the same value and the constrained game will have a NE in mixed-strategies.

By using appropriate set of matrices and vectors, we can reformulate Transmitter's and jammer's strategy constraint sets  defined in Section \ref{SubSec:Mixed_Strategy_Set} into the general format introduced in \eqref{Eq:Constrained_Sets_General_Case}. Specifically, consider the transmitter's constraint set, the following set of matrix, $B$, and vector, $\boldsymbol{c}$,  can be used to represent transmitter's constraint set;
\begin{equation}
\label{Eq:B_Matrix}
 B_{(N_T+1)\times 2} = \left[ \begin{array}{c|c}  
			\boldsymbol{1}^{T}_{(N_T+1)\times 1} & -\boldsymbol{1}^{T}_{(N_T+1)\times 1} 
			\end{array} \right]
\end{equation}
\begin{equation}
 \boldsymbol{c}_{1\times2} =  \begin{bmatrix}
                               1 & -1
                              \end{bmatrix}
\end{equation}
similarly, jammer's constraint set can be represented by the following matrix, $E$, and vector, $\boldsymbol{c}$,
\begin{multline}
 E^{T}_{(N_T+1)\times 3} = \\  \left[ \begin{array}{c|c|c}  
			    \boldsymbol{1}^{T}_{(N_T+1)\times 1}  & -\boldsymbol{1}^{T}_{(N_T+1)\times 1}  & -J^T_{(N_T+1)\times 1}  \end{array} \right] 
\end{multline}
\begin{equation}
\label{Eq:f_Vector}
 \boldsymbol{f}_{1\times3} = \begin{bmatrix}
                   1 & -1 & -J_{\text{ave}} \\
                  \end{bmatrix}
\end{equation}
From \eqref{Eq:Player_I_Maximization_Problem} and \eqref{Eq:f_Vector} the maximization program objective function becomes
\begin{equation}
\label{Eq:Objective_Function}
 \boldsymbol{z}\cdot \boldsymbol{f}^T = z_1 - z_2 - J_{\text{ave}}z_3
\end{equation}

To show that the maximization program is bounded and feasible and hence has a solution in mixed-strategies, we need to show that the objective function, given in \eqref{Eq:Objective_Function}, is bounded and the constraint set defined by the set of matrices and vectors in \eqref{Eq:B_Matrix}-\eqref{Eq:f_Vector} is non-empty. Assume, for now, that the objective function is unbounded, we must have 
\begin{equation}
\label{Eq:Objective_Function_Assumption}
 z_1 - z_2 - J_{\text{ave}}z_3 > Z_{\max} = \max_{i,j} Z_{ij}
\end{equation}
for some $\boldsymbol{z} = [ z_1\ z_2\ z_3] \geq 0$ that satisfy the constraints in \eqref{Eq:Player_I_Maximization_Problem}. Consider the first inequality in \eqref{Eq:Player_I_Maximization_Problem}, multiplying vectors $\boldsymbol{z}$ and $\boldsymbol{x}$ by the first column of matrices $E$ and $Z$ results in
\begin{align}
\label{Eq:Z_Max_Assumption}
  & z_1 - z_2 - J_0z_3 - \boldsymbol{x} Z_{:,1} \leq 0 \notag \\
  & \Rightarrow \quad z_1 - z_2 \leq \boldsymbol{x}Z_{:,1} \leq Z_{\max}=\max_{i} Z_{i,1} \notag \\
  & \Rightarrow \quad Z_{\max} \geq z_1 - z_2
\end{align}
where $Z_{:,1}$ denotes the first column of $Z$ and by assumption we have $J_0 = 0$. Plugging \eqref{Eq:Z_Max_Assumption} in \eqref{Eq:Objective_Function_Assumption} results in $z_3 < 0$ which is in contradiction with $\boldsymbol{z} > 0$. As a result, the objective function cannot be greater than $Z_{\max}$ and hence the program is bounded. To show that the constraint set is non-empty, we need to show that there exists at least one pair of $(\boldsymbol{x}, \boldsymbol{z})$ that satisfies the constraints in \eqref{Eq:Player_I_Maximization_Problem}. From the first inequality in \eqref{Eq:Player_I_Maximization_Problem} we must have
\begin{equation}
\label{Eq:Feasibily_Proof_1}
 z_1 - z_2 - J_iz_3 \leq \boldsymbol{x} Z_{:,i} \quad \forall  i: 1\leq i \leq N_T+1
\end{equation}
but $0\leq \boldsymbol{x} Z_{:,i} \leq Z_{\max}$ for $1\leq i \leq N_T+1$ and for all probability vectors $\boldsymbol{x}$, therefore if we let 
\begin{equation}
\label{Eq:Feasibily_Proof_2}
 z_1 -z_2 - J_iz_3 \leq 0 
\end{equation}
then \eqref{Eq:Feasibily_Proof_1} would be satisfied for all $1\leq i \leq N_T+1$. Choosing $z_3 = 0$ and letting $0< z_1 < z_2$ satisfies the inequality \eqref{Eq:Feasibily_Proof_2} and as a result the transmitter's constraint set is nonempty. 

Therefore, the transmitter's maximization program is feasible and bounded and has a solution. As a result of the duality theorem, the dual of this program, jammer's minimization problem, is also feasible and bounded has the same solution (NE in mixed-strategies).
\end{document}